\newtheorem{cor}[theorem]{Corollary}
\newenvironment{breakablealgorithm}
{
	\begin{center}
		\refstepcounter{algorithm}
		\hrule height.8pt depth0pt \kern2pt
		\renewcommand{\caption}[2][\relax]{
			{\raggedright\textbf{\ALG@name~\thealgorithm} ##2\par}%
			\ifx\relax##1\relax 
			\addcontentsline{loa}{algorithm}{\protect\numberline{\thealgorithm}##2}%
			\else 
			\addcontentsline{loa}{algorithm}{\protect\numberline{\thealgorithm}##1}%
			\fi
			\kern2pt\hrule\kern2pt
		}
	}{
		\kern2pt\hrule\relax
	\end{center}
}
\begin{document}
\title{The Elliptic Net Algorithm Revisited
}
%
%
\author{Shiping Cai\inst{1}\and
Zhi Hu\inst{2} \and
Zheng-An Yao\inst{1} \and Chang-An Zhao\inst{1,3}}
\institute{School of Mathematics, Sun Yat-sen University, Guangzhou 510275, P.R. China \and
 School of Mathematics and Statistics, Central South University, P.R. China \and Guangdong Key Laboratory of Information Security, Guangzhou 510006, P.R. China}
\maketitle              

\begin{abstract}
Pairings have been widely used since their introduction to cryptography. They can be applied to identity-based encryption, tripartite Diffie-Hellman key agreement, blockchain and other cryptographic schemes. The Acceleration of  pairing computations is crucial for these cryptographic schemes or protocols. In this paper, we will focus on the Elliptic Net algorithm which can compute pairings in polynomial time, but it requires more storage than Miller's algorithm. We use several methods to speed up the Elliptic Net algorithm. Firstly, we eliminate the inverse operation in the improved Elliptic Net algorithm. In some circumstance, this finding can achieve further improvements. Secondly, we apply lazy reduction technique to the Elliptic Net algorithm, which helps us achieve a faster implementation. Finally, we propose a new derivation of the formulas for the computation of the Optimal Ate pairing on the twisted curve. Results show that the Elliptic Net algorithm can be significantly accelerated especially on the twisted curve. The algorithm can be $80\%$ faster than the previous ones on the twisted 381-bit BLS12 curve and $71.5\%$ faster on the twisted 676-bit KSS18 curve respectively. 

\end{abstract}

\keywords{Elliptic Net Algorithm \and Twists of Elliptic Curves \and Pairings \and Denominator Elimination \and High Security Level.}
\section{Introduction}\label{sec1}
Pairings as mathematical primitives can offer efficient solutions to some special difficult problems in cryptography \cite{MarkJoyebook}. Nowadays, pairings still play a vital role in some areas. In the blockchain, pairings can be applied for the zero-knowledge succinct non-interactive argument of knowledge (zk-SNARK) \cite{Marlin,Groth2016}. Moreover, pairings can be used for the compression of public keys in the isogeny-based cryptosystem \cite{DualIsogenies2019}. 

The implementation of pairings is a key operation in these applications. The Weil pairing, the Tate pairing and their variants such as the Ate pairing \cite{HSV,Matsuda2007Optimised}, the R-ate pairing \cite{Lee2009Efficient}, and the Optimal Ate pairing \cite{Vercauteren2009Optimal} are used in some cryptographic schemes. It is well known that pairings can be computed by Miller's algorithm \cite{miller1986short,Miller2004} which was proposed in 1986. Miller's algorithm has been optimized a lot since 2000, and it has been developed to be in a relatively mature stage. There also exists another polynomial time algorithm to compute pairings, i.e., the Elliptic Net algorithm. This algorithm was proposed in 2007 by Stange~\cite{Stange2007} who first defined elliptic nets and gave a relationship between elliptic nets and the Tate pairing. We abbreviate this original Elliptic Net algorithm to ENA. Compared with Miller's algorithm, the Elliptic Net algorithm requires more computational costs while it can be implemented efficiently on a personal computer. Furthermore, there is no inverse operation involved in affine coordinates in the Elliptic Net algorithm. Therefore, the implementation of this algorithm is simple and intuitive.

Elliptic nets are generated by elliptic divisibility sequences which were first studied by Morgan Ward \cite{ward1948memoir} in 1948. These sequences arise from any choice of an elliptic curve and rational points on that curve. For more information about elliptic divisibility sequences see \cite{einsiedler_everest_ward_2001}.  The method called \textbf{Double-and-Add} for updating each value of an elliptic divisibility sequence in polynomial time which was proposed by Rachel Shipsey \cite{Shipsey2000Elliptic}. 

Pairings can be computed using elliptic nets of rank $2$. The ENA was used to compute the Tate pairing originally~\cite{Stange2007}. Then the explicit formulas for computing some variants of the Tate pairing using the ENA were given \cite{Tangchunming2014,CPEN}. In 2015, an improved version of the ENA was proposed \cite{Chen2015AnIO}. We abbreviate this algorithm to IENA in this work. The IENA can perform well if the parameter of the Miller loop has low Hamming weight. Fortunately, the most popular paring-friendly curves all meet this condition. Due to the particularity of the structure of elliptic nets, an parallel strategy for the ENA to compute pairings was proposed \cite{Onuki2016Faster}.

Elliptic nets of rank $1$ can be applied to scalar multiplication, and it is an algorithm that can resist side-channel attacks. Kanayama \emph{et al.} \cite{KANAYAMA2014} adopted the ENA to compute scalar multiplication using elliptic nets of rank $1$, i.e., division polynomials. Besides, there are some other works about scalar multiplication based on elliptic nets. An optimized version of scalar multiplication algorithm using division polynomials was proposed in~\cite{2017A}, which saved $four$ multiplications at each iteration by using the equivalence of elliptic nets. Based on these previous works, Rao \emph{et al.} \cite{8890309} proposed a modified algorithm based on elliptic nets to compute scalar multiplication. 

However, the efficiency of the Elliptic Net algorithm still needs to be avoided. In order to shorten the gap between the Elliptic Net algorithm and Miller's algorithm in efficiency, we develop several methods. Firstly, we analyze the properties of elliptic nets and conclude that the inverse operation in the IENA can be eliminated. Secondly, we construct the Optimal Ate pairing on the twisted curve and discuss the relationship between the Optimal Ate pairing on the original elliptic curve and that on the twisted curve with divisor and pull-back. This is a new derivation of the formulas for pairing computation which is entirely on the twisted curve. Thirdly, lazy reduction technique is employed in our implementation to get a further improvement. The specific contributions of this work are:
\begin{itemize}
	\item We explore how to eliminate the inverse operation in the IENA. For the IENA, an inverse operation is involved at addition step in the Double-and-Add algorithm. In this paper, we get a result in the updating process of the IENA. If all the values of an elliptic net in the current state are multiplied by a non-zero fixed value, then the value of the reduced Tate pairing or its variants can not be changed. This finding means that the inverse operation can be replaced by several multiplications. The implementation indicates that the IENA works well if it is further modified by this trick. Besides, this trick contributes to the scalar multiplication algorithm in \cite{8890309}.
	
	\item The idea of twists is employed to speed up the Elliptic Net algorithm. Twists of elliptic curves are deeply applied to Miller's algorithm, which can significantly decrease the amount of multiplications. More detailed descriptions see \cite{silver,HSV}. Throughout the process of Miller's algorithm, Costello \emph{et al.} \cite{2010Faster} explored the pairing computation which is entirely on the twisted curve. Based on these works, we use the Elliptic Net algorithm to compute the Optimal Ate pairing entirely on the twisted curve. We will provide a new proof based on the theory of divisors and pull-back, which has a strength that it not depends on the process of Miller's algorithm. Hence, this is totally different from the previous work. Furthermore, we give the explicit formulas of the line function of the Optimal Ate pairing on the twisted curve. We boost the performance of the Elliptic Net algorithm on a 381-bit BLS12 curve at 128-bit security level and a 676-bit KSS18 curve at 192-bit security level by using twists \cite{2018Updating}. Twists of elliptic curves allow us to transfer the operations from $\mathbb{F}_{q^k}$ to the proper subfield of $\mathbb{F}_{q^{k}}$, which significantly reduces the total amount of multiplications. 
	
	\item We adopt lazy reduction technique \cite{lim2000fast} which only performs one reduction for the sum of several multiplications to the Elliptic Net algorithm. Lazy reduction was first introduced in quadratic extension field arithmetic for Miller's algorithm by Michael Scott~\cite{scott2006implementing} and further developed  in~\cite{DKPC2011}. In the Elliptic Net algorithm, we observe that there are many terms have the form $A \cdot B - C \cdot D$, which inspires us to apply lazy reduction for this algorithm. In our implementation, lazy reduction reduces by around 27\% the number of modular reductions.
\end{itemize}

We conclude that pairings can be efficiently computed with the Elliptic Net algorithm. Even though it is still slower than Miller's algorithm, the ratio of the cost of the Elliptic Net algorithm to Miller's algorithm is reduced from more than $9$ to less than $2$ after the modification in this work.

The rest of this paper is organized as follows. Section~\ref{sec2} gives an overview of pairings, twists of elliptic curves and the Elliptic Net algorithm. In Section~\ref{sec3}, we replace an inverse operation by several multiplications in the IENA. Section~\ref{sec4} analyzes the Ate pairing and Optimal Ate pairing on the twisted curve that are computed by the Elliptic Net algorithm. In Section~\ref{sec5}, we apply lazy reduction technique to the Elliptic Net algorithm. The implementation and efficiency analysis are discussed in Section~\ref{sec6}. Section~\ref{sec7} concludes the paper.

\section{Preliminaries}\label{sec2}
In this section, we will give the definition of the Tate pairing and the (Optimal) Ate pairing. A brief description of twists of elliptic curves and the Elliptic Net algorithm will also be provided.
\subsection{Pairings}\label{sec2.1}
Let $\mathbb F_q$ be a finite field where $q=p^m\,(m\in \mathbb{Z}^+)$ and $p$ is a prime. Let $E$ be an elliptic curve defined over $\mathbb F_q$. We denote the $q$-power Frobenius endomorphism on $E$ by $\pi _q$. The number of points on $E/\mathbb{F}_q$ is given by $\#E(\mathbb F_q)=q+1-t$, where $t$ is the Frobenius trace of $\pi_q$. 

Choose a large prime $r$ with $r|\#E(\mathbb F_q)$, and let $k \in \mathbb Z^+$ be the embedding degree with respect to $r$ such that $r|q^k-1$ but $r^2\nmid q^k-1$. Choose $P \in E(\mathbb{F}_q)[r]$ and $Q \in \mathbb E(\mathbb{F}_{q^k})[r]$. The $r$-th roots of unity in $\mathbb F_{q^k}$, which we denote by $\mu_r$.

For an integer $i$ and a point $S$ on $E$, let $f_{i,S}$ be a rational function such that $${\rm Div}(f_{i,S})=i(S)-(iS)-(i-1)(\infty).$$ In particular, ${\rm Div}(f_{r,P})=rD_P=r(P)-r(\infty)$. Then the \textbf{reduced Tate pairing} \cite{GrangerHess2007} is defined as
$$\begin{aligned}
	Tate:\mathbb E(\mathbb{F}_q)[r] \times \mathbb E(\mathbb{F}_{q^k})[r] &\rightarrow \mu_r\\ (P,Q)&\mapsto Tate(P,Q)=f_{r,P}(Q)^{{q^k-1}/r}.
\end{aligned}$$

Furthermore, if we choose $P$ and $Q$ in specific subgroups of $E[r]$, the pairing computation can be sped up. Define
$$\mathbb G_1 \triangleq E[r]\bigcap Ker(\pi _q -[1]),\,\mathbb G_2 \triangleq E[r]\bigcap Ker(\pi _q -[q]).$$
Choose $P\in \mathbb{G}_1$ and $Q\in \mathbb{G}_2$. Let $T=t-1$. We can define a pairing when $r\nmid (T^k-1)/r$: 
$$\begin{aligned}
	Ate_E:\mathbb G_2 \times \mathbb G_1 &\rightarrow \mu_r\\ (Q,P)&\mapsto Ate_E(Q,P)=f_{T,Q}(P)^{{q^k-1}/r},
\end{aligned}$$ which is called the \textbf{Ate pairing} \cite{HSV}.

The Ate pairing is a variant of the Tate pairing, and the length of the Miller loop is short\cite{Matsuda2007Optimised,Lee2009Efficient}. 
The Optimal Ate pairing allows us to obtain the shortest loop length~\cite{Vercauteren2009Optimal}. We have the following theorem \cite{Vercauteren2009Optimal,ZZH08,ZZH07}.
\begin{theorem}\label{th2.2}
	Let $\lambda =\alpha r$ with $r \nmid \alpha$. We have $\lambda =\sum_{i=0}^{\varphi(k)}c_iq^i$, where $\varphi(k)$ is the Euler function of $k$, then we can define a bilinear map
	$$\begin{aligned}
		Opt_E:\mathbb G_2 \times \mathbb G_1 &\rightarrow \mu_r\\ (Q,P)&\!\mapsto Opt_E(Q,P)\!=(\!\prod _{i=0}^{\varphi(k)-1}\!f_{c_i,Q}^{q^i}(P)\!\cdot \!\prod_{i=0}^{\varphi(k)-1}\!\frac{l_{[s_{i+1}]Q,[c_iq^i]Q}(P)}{v_{[s_i]Q}(P)})^{{q^k-1}/r},
	\end{aligned}$$ 
	where $s_i=\sum_{j=i}^{\varphi(k)}c_jq^j$. If $\alpha k q^{k-1} \neq\left(\left(q^{k}-1\right) / r\right) \sum_{i=0}^{\varphi(k)-1} i c_{i} q^{i-1}(\bmod\,r),$
	then $Opt_E$ is non-degenerate. We call $Opt_E$ as the \textbf{Optimal Ate pairing}.
\end{theorem}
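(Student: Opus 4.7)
The plan is to show that $Opt_E$ is a fixed integer power of the reduced Tate pairing, which yields bilinearity at once, and then identify that power to read off the non-degeneracy condition. Because $\lambda=\alpha r$, iterating the Miller relation $f_{a+b,R}=f_{a,R}f_{b,R}\cdot l_{aR,bR}/v_{(a+b)R}$ gives $f_{\lambda,Q}=f_{r,Q}^{\alpha}$ up to line and vertical corrections supported at multiples of $Q$. After raising to $(q^k-1)/r$, these corrections evaluated at $P\in\mathbb G_1$ are killed (their values lie in proper subfields of $\mathbb F_{q^k}$), so $f_{\lambda,Q}(P)^{(q^k-1)/r}=Tate(Q,P)^{\alpha}$. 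Since $r\nmid\alpha$, any map equal to this on the nose is already a non-degenerate pairing, and bilinearity of $Opt_E$ will follow once we match it to this expression.

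Next I would unfold $f_{\lambda,Q}$ using the $q$-adic decomposition $\lambda=\sum_{i=0}^{\varphi(k)}c_iq^i$. A recursive use of Miller's identity yields
\[
f_{\lambda,Q}=\prod_{i=0}^{\varphi(k)}f_{c_iq^i,Q}\cdot\prod_{i=0}^{\varphi(k)-1}\frac{l_{[s_{i+1}]Q,\,[c_iq^i]Q}}{v_{[s_i]Q}},
\]
with $s_i=\sum_{j=i}^{\varphi(k)}c_jq^j$ as in the statement. To replace $f_{c_iq^i,Q}$ by $f_{c_i,Q}^{q^i}$, I would invoke the factorization $f_{ab,Q}(P)=f_{a,Q}(P)^{b}\cdot f_{b,aQ}(P)$ and the key facts that $\pi_q(Q)=qQ$ (from $Q\in\mathbb G_2$) and $\pi_q(P)=P$ (from $P\in\mathbb G_1$); this identifies $f_{c_i,Q}(P)^{q^i}$ with the Galois-conjugate $f_{c_i,q^iQ}(P)$ and lets one absorb the residual $f_{q^i,c_iQ}(P)$ into a clean power of $f_{q,Q}(P)^{(q^k-1)/r}$, itself a Tate-pairing value.

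For the non-degeneracy criterion, the plan is to compute the integer $M$ with $Opt_E(Q,P)=Tate(Q,P)^{M}\pmod{\mu_r}$. Comparing the candidate product in the statement to $f_{\lambda,Q}(P)^{(q^k-1)/r}=Tate(Q,P)^{\alpha}$ and tracking how many copies of $f_{q,Q}(P)^{(q^k-1)/r}$ have been shed in the Frobenius-trick step amounts to a Leibniz-style differentiation of $\lambda(q)=\sum c_iq^i$ with respect to $q$. This gives
\[
M\equiv\alpha k q^{k-1}-\tfrac{q^k-1}{r}\sum_{i=0}^{\varphi(k)-1}ic_iq^{i-1}\pmod r,
\]
so $Opt_E$ is non-degenerate iff $r\nmid M$, which is the hypothesis of the theorem.

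The main obstacle is this last exponent computation: every Frobenius correction $f_{q^i,cQ}$ introduced when simplifying $f_{c_iq^i,Q}$, and every vertical $v_{[s_i]Q}$ appearing in the Miller decomposition, must be classified as either trivialized by the final exponentiation or as contributing to $M$. I would control this by working throughout with divisors on $E$ (so that equalities are exact modulo principal divisors), evaluating at $P$ only after the final exponentiation is applied, and using that any value lying in a proper subfield $\mathbb F_{q^i}\subsetneq\mathbb F_{q^k}$ becomes an $r$-th power after raising to $(q^k-1)/r$. A sanity check on a concrete family (for instance BN curves with $k=12$) would confirm that the final exponent $M$ agrees with the formula above.
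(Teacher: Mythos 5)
The paper offers no proof of this statement: Theorem~\ref{th2.2} is quoted from the cited literature on optimal pairings (Vercauteren and the ZZH papers), so there is nothing in-paper to compare your argument against. Your sketch is, in outline, the standard published proof: unfold $f_{\lambda,Q}$ along the base-$q$ digits with Miller's addition relation to produce exactly the line/vertical factors $l_{[s_{i+1}]Q,[c_iq^i]Q}/v_{[s_i]Q}$ of the statement, use $f_{ab,Q}=f_{a,Q}^{b}\,f_{b,[a]Q}$ together with $\pi_q(Q)=[q]Q$ and $\pi_q(P)=P$ to turn $f_{c_iq^i,Q}(P)$ into $f_{c_i,Q}(P)^{q^i}$ times an Ate-type pairing value, and then track exponents to conclude $Opt_E=Tate^{M}$ with $M\equiv \alpha kq^{k-1}-\bigl((q^k-1)/r\bigr)\sum_i ic_iq^{i-1}\pmod r$, whence bilinearity is automatic and non-degeneracy is equivalent to $r\nmid M$. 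Each ingredient is a known lemma, and the skeleton is correct.

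One justification is wrong as stated, though it does not sink the argument. You claim the corrections relating $f_{\lambda,Q}$ to $f_{r,Q}^{\alpha}$ are killed by the final exponentiation ``because their values lie in proper subfields of $\mathbb F_{q^k}$.'' That is not the mechanism: line values $l_{aQ,bQ}(P)$ do not in general lie in proper subfields (if they did, the surviving factors $l_{[s_{i+1}]Q,[c_iq^i]Q}(P)$ in the theorem would also be trivial, which they are not). The corrections in this particular step vanish because they are supported at multiples of $[r]Q=\infty$; cleanest is $f_{\alpha r,Q}=f_{r,Q}^{\alpha}\cdot f_{\alpha,[r]Q}=f_{r,Q}^{\alpha}\cdot f_{\alpha,\infty}$, where $f_{\alpha,\infty}$ has trivial divisor and can be normalized to $1$. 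With that repair, the lemma $f_{\lambda,Q}(P)^{(q^k-1)/r}=Tate(Q,P)^{\alpha}$ holds and the rest of your plan goes through; the remaining work is the careful exponent bookkeeping in your third paragraph, which you correctly identify as the crux and for which your stated answer matches the theorem's non-degeneracy condition. (Minor note: your digit product runs to $\varphi(k)$ while the statement's first product stops at $\varphi(k)-1$; this is an inconsistency in the paper's transcription, not in your decomposition.)
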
                         
The explicit expression of the Optimal Ate pairing depends on the family type of pairing-friendly curves. In this work, we mainly consider the implementation of the Optimal Ate pairing on the BLS12 and KSS18 curves. More specific information will be discussed in Section~\ref{sec6}. 

\subsection{Twists of Elliptic Curves}
\begin{definition} \label{def2.2}
	A twist of degree \(d\) of \(E\) is an elliptic curve \(E^{\prime}\) defined over $\mathbb{F}_{q^{k/d}}$. We can define an isomorphism \(\Psi_{d}\) over \(\mathbb{F}_{q^{k}}\) from \(E^{\prime}\) to \(E\) with \(d\) is\textbf{ minimal}:
	\[
	\Psi_{d}: E^{\prime}\left(\mathbb{F}_{q^{k/d}}\right) \longrightarrow E\left(\mathbb{F}_{q^{k}}\right).
	\]
\end{definition}
The potential degree $d$ is $2,3,4$ or $6$ \cite{MarkJoyebook,silver}. For the BLS12 and KSS18 curves, $E'$ is a twist of degree $6$ of $E$. Let \(\xi \in \mathbb{F}_{q^{k/6}}\). For the M-type and D-type twists \cite{softimp} with degree $d=6$, the corresponding isomorphism $\Psi_6$ is given as follows:
\begin{equation}\label{eq13}
	\begin{aligned}
		M-type:\,&E':y^{2}=x^{3}+b\xi &\Psi_6: E'\rightarrow E:(x, y) \mapsto\left(\xi^{-1/3} x, \xi^{-1 / 2} y\right),\\
		D-type:\,&E':y^{2}=x^{3}+b / \xi &\Psi_6: E'\rightarrow E:(x, y) \mapsto\left(\xi^{1/3} x, \xi^{1 / 2} y\right).
	\end{aligned}
\end{equation}

Furthermore, we have the following theorem for the Tate pairing:
\begin{theorem}\cite{2005Advances}
	Let $E_1/\mathbb{F}_q$ be an elliptic curve. Choose $r_0|\#E_1(\mathbb{F}_q)$. Suppose that the embedding degree with respect to $q$ and $r_0$ is $k$. There exists an isogeny $\phi:E_1 \rightarrow E_2$ and $\hat{\phi}$ is the dual of $\phi$, where $E_2$ is an elliptic curve over $\mathbb{F}_{q^k}$. Choose $P \in E_1(\mathbb{F}_q)[n]$ and $Q \in E_2(\mathbb{F}_{q^k})$. We have $e(P,\phi (Q))=e(\hat{\phi}(P),Q)$. 
\end{theorem}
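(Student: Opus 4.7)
The plan is to reduce the claimed identity to a statement about divisors and the functoriality of Miller functions under isogenies. The reduced Tate pairing admits the equivalent divisor formulation $e(P_0,Q_0) = f_{r_0,P_0}(D_{Q_0})^{(q^k-1)/r_0}$, where $D_{Q_0}$ is any degree-zero divisor linearly equivalent to $(Q_0)-(\infty)$ whose support is disjoint from that of $\mathrm{Div}(f_{r_0,P_0})$. I would begin by rewriting both sides of the claimed identity in this form, so that the problem reduces to comparing two function evaluations up to an $r_0$-th power in $\mathbb{F}_{q^k}^{\ast}$.

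The central tool is standard functoriality: for any rational function $g$ on the target of an isogeny $\psi$, one has $\mathrm{Div}(\psi^{\ast} g) = \psi^{\ast}\mathrm{Div}(g)$ and $(\psi^{\ast} g)(R) = g(\psi(R))$, together with the projection formula $g(\psi_{\ast} D) = (\psi^{\ast} g)(D)$. Applied with $g = f_{r_0,P}$, this lets me convert an evaluation on the target curve into an evaluation of the pulled-back function on the source curve, against an appropriate push-forward divisor. Thus the right-hand side $e(\hat{\phi}(P),Q)$ can be reshaped into an expression involving $\phi^{\ast} f_{r_0,P}$ evaluated on $E_2$, and analogously for the left-hand side.

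The key step is then to identify the pulled-back Miller function $\phi^{\ast} f_{r_0,P}$ with the Miller function $f_{r_0,\hat{\phi}(P)}$ up to an $r_0$-th power. This follows from a divisor comparison using $\hat{\phi}\circ\phi = [\deg\phi]$ together with the standard translation of Miller functions under the multiplication-by-$m$ map (which takes the shape $f_{mn,P} = f_{m,P}^n \cdot f_{n,[m]P}$ up to a principal-divisor correction). The two divisors will agree up to $r_0$ times a principal divisor, hence the two rational functions agree up to a nonzero constant multiplied by an $r_0$-th power. Raising to the final exponent $(q^k-1)/r_0$ collapses this ambiguity, because any $r_0$-th power in $\mathbb{F}_{q^k}^{\ast}$ becomes trivial in $\mu_{r_0}$ after that exponentiation.

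The main obstacle is precisely this last piece of bookkeeping: one must verify that the residual factor relating $\phi^{\ast} f_{r_0,P}$ and $f_{r_0,\hat{\phi}(P)}$ really does become trivial after the final exponentiation, which requires a careful choice of divisors with pairwise disjoint supports (to avoid evaluating any of the functions at their poles or zeros) and an argument that the leftover constant is an $r_0$-th power, or equivalently that $\deg\phi$ contributes only through a factor killed by the cofactor $(q^k-1)/r_0$. Once that verification is in place, both sides of the claimed identity coincide, completing the proof.
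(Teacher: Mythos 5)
The paper never proves this statement: it is imported wholesale from the cited reference and used as a black box, so there is no internal proof to compare against. Judged on its own, your outline is the standard argument and is essentially correct. Writing $e(P,\phi(Q))=f_{r_0,P}(\phi_{*}D_Q)^{(q^k-1)/r_0}$ for a degree-zero divisor $D_Q\sim (Q)-(\infty)$, the projection formula gives $f_{r_0,P}(\phi_{*}D_Q)=(\phi^{*}f_{r_0,P})(D_Q)$, and everything then hinges on identifying $\phi^{*}f_{r_0,P}$ with $f_{r_0,\hat{\phi}(P)}$ up to a multiplicative constant and an $r_0$-th power — exactly as you say. Two refinements. First, the detour you propose through $\hat{\phi}\circ\phi=[\deg\phi]$ and the composition law for Miller functions is unnecessary and the most delicate part of your sketch: the fact that $\phi^{*}$ acting on degree-zero divisor classes corresponds to $\hat{\phi}$ acting on points is essentially the definition of the dual isogeny via $\mathrm{Pic}^0$, so $\phi^{*}\bigl((P)-(\infty)\bigr)\sim(\hat{\phi}(P))-(\infty)$ is immediate, $\mathrm{Div}(\phi^{*}f_{r_0,P})=r_0\,\phi^{*}\bigl((P)-(\infty)\bigr)$ yields the identification directly, and no $\deg\phi$ factor survives in this formulation (that factor only appears in identities such as $e(\phi(P),\phi(Q))=e(P,Q)^{\deg\phi}$). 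Second, the leftover multiplicative constant does not need to be an $r_0$-th power: it is killed simply because it is evaluated at a divisor of degree zero, while the genuine $r_0$-th power $h(D_Q)^{r_0}$ is killed by the final exponentiation; separating these two mechanisms tightens the last step you flag as the "main obstacle." Finally, note that the statement as printed is type-inconsistent ($\phi:E_1\to E_2$, yet $\phi$ is applied to $Q\in E_2$ and $\hat{\phi}$ to $P\in E_1$, and $[n]$ should be $[r_0]$); your proof silently adopts the correct reading, which is worth making explicit before the divisor bookkeeping begins.
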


Notice that $\Psi_d$ is an isogeny of degree $1$. If we denote the dual of $\Psi_d$ by $\hat{\Psi}_d$, then $\hat{\Psi}_d\circ \Psi_d =[1]$, i.e., $\Psi_d^{-1}=\hat{\Psi}_d$. By Definition~\ref{def2.2}, choose $P \in E(\mathbb{F}_q)[r]$ and $Q' \in E'(\mathbb{F}_{q^{k/d}})$. We can compute pairings $(Opt)Ate_{E'}(\hat{\Psi}_d(P),Q')$ on the twisted curve $E'$. And the loop length is the same as $(Opt)Ate_E(P,\Psi_d(Q'))$ which is computed on the original curve $E$.

Furthermore, define
$$\Phi_d=\Psi_d^{-1}\circ \pi_q \circ \Psi_d.$$
One can verify that $\Phi_d$ is a group isomorphism from $E'$ to $E'$ over $\mathbb{F}_{q^k}$ \cite{GalEnd}, which can be used in Section~\ref{sec4}.
\subsection{The Elliptic Net algorithm}\label{sec2.3}
An elliptic net satisfies some certain recurrence relation which is a map $W$ from a finitely generated free Abelian group $A$ to an integral domain $R$. An elliptic net of rank 1 satisfies the following recurrence relation:
\begin{equation}\label{eq1}
	\begin{aligned}
		&W(\alpha+\beta+\delta,0)W(\alpha-\beta,0)W(\gamma+\delta,0)W(\gamma,0)\\
		&+W(\beta+\gamma+\delta,0)W(\beta-\gamma,0)W(\alpha+\delta,0)W(\alpha,0)\\
		&+W(\gamma+\alpha+\delta,0)W(\gamma-\alpha,0)W(\beta+\delta,0)W(\beta,0)=0,
	\end{aligned}
\end{equation}
where $\alpha,\,\,\beta,\,\gamma,\,\delta \in A$.
Let $E_0:y^2=x^3+Ax+B$ be a short Weierstrass curve over $\mathbb{F}_q$, where $4A^3+27B^2 \neq 0$. And the characteristic of $\mathbb{F}_q$ is not equal to $2$ or $3$.
\subsubsection{Scalar Multiplication}
For each $n \in \mathbb{Z}^{+}$, we can define division polynomials $\psi_n\in \mathbb{Z}[A,B,x,y]$ as follows \cite{silver}.
\begin{equation*}
	\begin{aligned}
		\psi_0&=0,\psi_1=1,\psi_2=2y,\\
		\psi_3&=3x^4+6Ax^2+12Bx-A^2,\\
		\psi_4&=4y(x^6+5Ax^4+20Bx^3-5A^2x^2-4ABx-8B^2-A^3),\\
		\psi_{2n+1}\psi_1&=\psi_{n+2}\psi_n^3-\psi_{n-1}\psi_{n+1}^3\,\,(n\geq 2),\\
		\psi_{2n}\psi_2&=\psi_n(\psi_{n+2}\psi_{n-1}^2-\psi_{n-2}\psi_{n+1}^2)\,\,(n \geq 3).
	\end{aligned}
\end{equation*}

Division polynomials are elliptic nets of rank $1$, i.e., $W(i,0)=\psi_i,\,\forall i \in \mathbb{Z}$. They can be used to compute scalar multiplication.

 Choose $P=(x_P,y_P)\in E_0(\mathbb{F}_q)$, and define two polynomials $\zeta_n,\,\omega_n$. These formulas can be used to compute $[n]P=(x_{nP},y_{nP})$ as follows.
\begin{equation}\label{eq2.2}
	[n]P=\left(\frac{\zeta_n(P)}{\psi_n(P)^2},\frac{\omega_n(P)}{\psi_n(P)^3}\right),
\end{equation}
where
\begin{equation*}
	\begin{aligned}
		\zeta_n&=x_P\psi_n^2-\psi_{n+1}\psi_{n-1},\\
		4y_P\omega_n&=\psi_{n+2}\psi_{n-1}^2-\psi_{n-2}\psi_{n+1}^2.
	\end{aligned}
\end{equation*}
Equation~(\ref{eq2.2}) can be represented by elliptic nets of rank $1$ \cite{KANAYAMA2014}:
\begin{equation}\label{eq2.3}
	\begin{aligned}
		x_{nP}&=x_P-\frac{W(n-1,0)W(n+1,0)}{W(n,0)^2},\\
		y_{nP}&=\frac{W(n-1,0)^2W(n+2,0)-W(n+1,0)^2W(n-2,0)}{4y_PW(n,0)^3}.
	\end{aligned}
\end{equation}
\subsubsection{Pairing Computation}
Elliptic nets of rank $2$ is applied to pairing computation. The relationship between the Tate pairing and an elliptic net is given below.

\begin{theorem}\cite{Stange2007} \label{th1}
	Choose $P\in E(\mathbb{F}_q)[r]$ and $Q \in E(\mathbb{F}_{q^k})[r]$ such that $[r]P=\infty$. If $W_{P,Q}$ is the elliptic net associated to $E$, $P$, $Q$, then we have
	
	\begin{equation*}
		f_{r,P}(D_Q)=\frac{W_{P,Q}(r+1,1)W_{P,Q}(1,0)}{W_{P,Q}(r+1,0)W_{P,Q}(1,1)}.
	\end{equation*}
\end{theorem}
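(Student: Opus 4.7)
\medskip

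\noindent\textbf{Proof proposal.} The plan is to identify both sides of the claimed identity as rational functions on $E$ (with $P$ fixed) having the same divisor as $f_{r,P}$ evaluated along $D_Q$, and then pin down the constant of proportionality.

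First, I would recall the analytic model for the elliptic net, namely the description of $W_{P,Q}(m,n)$ in terms of the Weierstrass sigma function. Identifying $P,Q\in E(\mathbb{C}/\Lambda)$ with $u,v\in\mathbb{C}/\Lambda$, one has (up to an exponential/quadratic normalizing factor $\mathrm{e}^{Q(m,n;u,v)}$ that cancels in the product we build) a formula of the shape
\begin{equation*}
\Omega_{m,n}(u,v)\;=\;\frac{\sigma(mu+nv)}{\sigma(u)^{m^{2}-mn}\,\sigma(v)^{n^{2}-mn}\,\sigma(u+v)^{mn}},
\end{equation*}
with $W_{P,Q}(m,n)=\Omega_{m,n}(u,v)$. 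This is the usual construction in which Stange's net polynomials come from; one checks that $\Omega$ satisfies the three-term recurrence~(\ref{eq1}), making the identification rigorous in the algebraic setting after clearing sigma quotients.

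Next, with $u$ fixed, I would view each $\Omega_{m,n}(u,\cdot)$ as a rational function of $v\in E$ and compute its divisor. The zero of $\sigma(mu+nv)$ in $v$ occurs precisely at the $n$-torsion translates of $-mP/n$, and the $\sigma(v)$, $\sigma(u+v)$ factors contribute poles at $\infty$ and at $-P$ respectively. Collecting these, the divisor of $W_{P,Q}(m,n)$ as a function of $Q$ takes the shape
\begin{equation*}
\mathrm{div}_{Q}\!\bigl(W_{P,Q}(m,n)\bigr)\;=\;\sum_{T\in E[n]}\!\bigl([m]P+[n]Q=T\text{-preimage}\bigr)\;-\;(n^{2}-mn)(\infty)-mn\,(-P)-\cdots.
\end{equation*}
For the specific indices appearing in the theorem ($(r+1,1)$, $(1,0)$, $(r+1,0)$, $(1,1)$) and using the crucial hypothesis $[r]P=\infty$, I would compute the divisor of the right-hand side ratio as a function of $Q$ and show that all contributions from $(\infty)$, from $(-P)$, and from translates away from $P$ cancel, leaving
\begin{equation*}
\mathrm{div}_{Q}\!\left(\frac{W_{P,Q}(r+1,1)\,W_{P,Q}(1,0)}{W_{P,Q}(r+1,0)\,W_{P,Q}(1,1)}\right)\;=\;r(P)-r(\infty),
\end{equation*}
which is exactly $r D_{P}$, the divisor of $f_{r,P}$.

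Finally, since two rational functions on $E$ with the same divisor differ by a nonzero scalar, I obtain that the two sides of the claimed equality agree up to some $c\in\mathbb{F}_{q^{k}}^{\times}$. To nail down $c=1$, I would use the homogeneity/normalization inherent in the sigma-function description (all the $\sigma(u)$, $\sigma(u+v)$, $\sigma(v)$ factors enter with matching multiplicities in numerator and denominator), so that the ratio on the right is indeed the canonical Miller function $f_{r,P}$ evaluated at $D_{Q}=(Q)-(\infty)$ rather than a scalar multiple; equivalently, evaluating the ratio at a convenient auxiliary point (or passing through Weil reciprocity for $\mathrm{div}(f_{r,P})$ and $D_{Q}$) gives $c=1$.

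I expect the main obstacle to be the bookkeeping in the divisor calculation, in particular making the cancellation of the $(\infty)$- and $(-P)$-contributions rigorous once $[r]P=\infty$ is used, and justifying that the normalization of the sigma quotient is precisely the normalization of $f_{r,P}$ so that no residual constant survives. A secondary subtlety is that $f_{r,P}(D_{Q})=f_{r,P}(Q)/f_{r,P}(\infty)$ is only well-defined via a choice of divisor equivalent to $(Q)-(\infty)$ with disjoint support; the sigma-function model sidesteps this automatically, which is why the identity comes out clean.
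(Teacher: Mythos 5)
First, a point of calibration: the paper does not prove this theorem at all --- it is quoted with a citation to Stange's work \cite{Stange2007} --- so there is no in-paper argument to compare against. Your proposal does correctly identify the architecture of Stange's original proof: the Weierstrass $\sigma$-function model $\Omega_{m,n}(u,v)=\sigma(mu+nv)\,\sigma(u)^{-(m^{2}-mn)}\sigma(v)^{-(n^{2}-mn)}\sigma(u+v)^{-mn}$ (whose quasi-periodicity factors already cancel, so no extra exponential normalization is needed), followed by divisor bookkeeping in the variable $Q$.

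The genuine gap is in the divisor computation, and it propagates into your endgame. With $[r]P=\infty$, i.e.\ $ru\in\Lambda$, the only factor of the ratio that depends on $v$ is $\Omega_{r+1,1}$ (since $\Omega_{1,0}=\Omega_{1,1}=1$ and $\Omega_{r+1,0}$ is a function of $u$ alone), and its divisor in $Q$ is $(-P)+r(\infty)-(r+1)(-P)=r(\infty)-r(-P)$: a zero of order $r$ at $\infty$ and a pole of order $r$ at $-P$. This is \emph{not} $r(P)-r(\infty)$ as you claim; the two divisors are linearly equivalent (they differ by $r$ times the divisor of the vertical line through $P$) but not equal. Consequently the step ``both sides have the same divisor, hence differ by a scalar $c$, and $c=1$'' is unavailable: the two sides differ by the $r$-th power of a nonconstant function, and no normalization of the $\sigma$-quotient or evaluation at an auxiliary point will force a literal equality of field elements. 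What is actually provable --- and what is all that is needed, since the unreduced Tate pairing lives in $\mathbb{F}_{q^k}^{*}/(\mathbb{F}_{q^k}^{*})^{r}$ and the exponent $(q^k-1)/r$ kills $r$-th powers --- is equality modulo $r$-th powers. Concretely, applying $\sigma(z+\lambda)=\pm e^{\eta(\lambda)(z+\lambda/2)}\sigma(z)$ with $\lambda=ru$ collapses the ratio to $e^{\eta(ru)v}\bigl(\sigma(u)\sigma(v)/\sigma(u+v)\bigr)^{r}$, which one then matches against a $\sigma$-expression for $f_{r,P}$ evaluated at a divisor equivalent to $(Q)-(\infty)$; this quasi-periodicity step is where the hypothesis $[r]P=\infty$ genuinely enters, not merely in ``cancelling the $(\infty)$- and $(-P)$-contributions.'' To repair the sketch you would need to (i) correct the divisor, (ii) weaken the target to equality up to $r$-th powers (or fix a specific $D_Q$ and track the vertical-line factor explicitly), and (iii) carry out the quasi-periodicity computation rather than appeal to homogeneity.
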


According to Equation~(\ref{eq1}), we obtain the explicit formulas to update the values of an elliptic net. We can compute the Tate pairing in polynomial time if the initial values of an elliptic net are given. For simplicity, we abbreviate $W_{P,Q}(n,s)$ to $W(n,s)$. In \cite{Stange2007}, they defined a block that consists of a first vector of eight consecutive terms centered on term $W(i,0)$ and a second vector of three consecutive terms centered on $W(i,1)$, where $i \in \mathbb{Z}$. 

Assume that $W(1,0)=W(0,1)=1$. For the first vector, all of $W(n,0)$ terms can be updated by two formulas as follows.
\begin{equation}\label{eq2}
	W(2i-1,0)=W(i+1,0)W(i-1,0)^3-W(i-2,0)W(i,0)^3,
\end{equation}
\begin{equation}\label{eq3}
	\begin{split}
		W(2i,0)=(W(i,0)W(i+2,0)W(i-1,0)^2\\-W(i,0)W(i-2,0)W(i+1,0)^2)/W(2,0).
	\end{split}
\end{equation}
For the second vector, we need the following formulas to update the $W(n,1)$ terms.
\begin{equation}\label{eq4}
	\begin{split}
		W(2i-1,1)=(W(i+1,1)W(i-1,1)W(i-1,0)^2\\-W(i,0)W(i-2,0)W(i,1)^2)/W(1,1),
	\end{split}
\end{equation}
\begin{equation}\label{eq5}
	\begin{split}
		W(2i,1)=(W(i-1,1)W(i+1,1)W(i,0)^2\\-W(i-1,0)W(i+1,0)W(i,1)^2),
	\end{split}
\end{equation}
\begin{equation}\label{eq6}
	\begin{split}
		W(2i+1,1)=(W(i-1,1)W(i+1,1)W(i+1,0)^2\\-W(i,0)W(i+2,0)W(i,1)^2)/W(-1,1),
	\end{split}
\end{equation}
\begin{equation}\label{eq7}
	\begin{split}
		W(2i+2,1)=(W(i+1,0)W(i+3,0)W(i,1)^2\\-W(i-1,1)W(i+1,1)W(i+2,0)^2)/W(2,-1).
	\end{split}
\end{equation}

For some certain conditions, $W(2,0)$ can be changed to $1$ by  the equivalence of elliptic nets \cite{2017A}.

\floatname{algorithm}{Algorithm}
\renewcommand{\algorithmicrequire}{\textbf{INPUT:}}
\renewcommand{\algorithmicensure}{\textbf{OUTPUT:}}
\begin{algorithm}
	\caption{The improved Elliptic Net algorithm \cite{Chen2015AnIO}}\label{alg1}
	\begin{algorithmic}[1] 
		\REQUIRE Initial terms $a=W(2,0),\,b=W(3,0),\,c=W(4,0),\,d=W(2,1),\,e=W(-1,1),\,f=W(2,-1),\,g=W(1,1),\,h=W(2,1)$ of the Elliptic Net algorithm satisfies $W(1,0)=W(0,1)=1$ and $n=(d_ld_{l-1}...d_0)_2 \in \mathbb Z$ with $d_l=1$ and $d_i \in \{0,1\}$ for $0 \leq i \leq l-2$
		\ENSURE $W(n,0),W(n,1)$
		\STATE $V \gets [[-a,-1,0,1,a,b,c],[1,g,d]]$
		\FOR{$i = l-1 \to 0$}
		\IF{$d_i=0$}
		\STATE $V \gets Double(V)$
		\ELSE
		\STATE $V \gets DoubleAdd(V)$
		\ENDIF
		\ENDFOR
		\RETURN{$V[0,3],V[1,1]$}
	\end{algorithmic}
\end{algorithm}
The IENA is shown in Algorithm~\ref{alg1}. Generally, updating a block centered on $i$ to a block centered on $2i$ is called the Double step, and updating a block centered on $i$ to a block centred on $2i+1$ is called the DoubleAdd step, which is represented by $Double(V)$ and $DoubleAdd(V)$ respectively. The algorithm to compute the process of line 2-8 in Algorithm~\ref{alg1} is called the Double-and-Add algorithm. If we just need to compute scalar multiplication, then we only need to update the first vector by Equation~(\ref{eq2})-(\ref{eq3}). We do not use the IENA to compute scalar multiplication here. There exists an inversion if we need to compute the DoubleAdd step in the IENA. But for the scalar multiplication, the scalar $n$ is random and we can not ensure that $n$ is an integer with low Hamming weight. Notice that $4$ multiplications can be saved at each iteration in the Double-and-Add algorithm if $gcd(p-1,3)=1$ \cite{2017A}. In this work, we will improve the Double-and-Add algorithm for scalar multiplication in two situations in \cite{8890309}.

Note that twists of elliptic curves have been applied for accelerating Miller's algorithm successfully. The situation of operations entirely on the twisted curve $E_0'$ was proposed \cite{2010Faster}. Their derivation of the Ate pairing entirely on the twisted curve heavily relies on the process of Miller's algorithm. Pairings entirely on the twisted curve can also be computed by the Elliptic Net algorithm, but it still needs to be verified. 

\section{Elimination of the Inverse Operation}\label{sec3}
In the IENA, an inverse operation is always involved at addition step. We will show how to eliminate this inverse operation in this section, i.e., replace the inversion by few multiplications.

When a block centered on $i$ is updated to a block centered on $2i+1$, the value $W(2i+4,0)$ satisfies the following recursive formula:
\begin{equation}\label{eq11}
	W(2 i+4,0)=\frac{W(2 i+3,0) W(2 i+1,0) W(2,0)^{2}-W(3,0) W(1,0) W(2 i+2,0)^{2}}{W(2 i, 0)}.
\end{equation}

From Equation~(\ref{eq11}), we need to compute the inverse element of $W(2i,0)$. To eliminate this inverse operation, we multiply $W(\lambda,0)_{2i-3\leq \lambda \leq 2i+4}$ by $W(2i,0)$ simultaneously when the bit is equal to $1$. We have the following theorem to support this approach.

\begin{theorem}\label{th3}
	Let $W(\lambda,0)_{i-3\leq \lambda \leq i+3}$, $W(\lambda,1)_{i-1\leq \lambda \leq i+1} \in \mathbb{F}_{q^k}$ be the current state of an elliptic net. 
	\begin{enumerate}
		\item For $\alpha \in \mathbb{F}_{q^k}^*$, if $W(\lambda,0)_{i-3\leq \lambda \leq i+3}$ are multiplied by $\alpha$, i.e., 	
		$$\hat W(\lambda,0)_{i-3\leq \lambda \leq i+3}=\alpha\cdot W(\lambda,0)_{i-3\leq \lambda \leq i+3},$$ then in the next state
		\begin{equation*}
			\begin{aligned}
				&\hat W(\lambda,0)_{2i-3\leq \lambda \leq 2i+3}=\alpha^4 \cdot W(\lambda,0)_{2i-3\leq \lambda \leq 2i+3},\\
				&\hat W(\lambda,1)_{2i-1\leq \lambda \leq 2i+1}=\alpha^2 \cdot W(\lambda,1)_{2i-1\leq \lambda \leq 2i+1}.
			\end{aligned}
		\end{equation*}
		Furthermore, if $\alpha \neq 0$ is chosen to be in a proper subfield of $\mathbb{F}_{q^k}$, then the value of the reduced Tate pairing or its variants can not be changed.
		\item For $\alpha \!\in \mathbb{F}_{q^k}^*$, if $W(\lambda,0)_{i-3\leq \lambda \leq i+3}$ and $W(\lambda,1)_{i-1 \leq \lambda \leq i+1}$ are multiplied by $\alpha$ , then in the next state all the terms of this elliptic net will be multiplied by $\alpha^4$, and the value of the reduced Tate pairing or its variants can not be changed.
	\end{enumerate}	
\end{theorem}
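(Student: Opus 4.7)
My proof plan has three pieces: verify homogeneity of the update formulas $(2)$--$(7)$ in one step, propagate that scaling through the remaining Double-and-Add iterations, and invoke the standard final-exponentiation argument together with Theorem~\ref{th1}.

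For the single-step claims I would assign a formal weight $a$ to every current $W(\cdot,0)$ and a weight $b$ to every current $W(\cdot,1)$, treating the denominators $W(2,0)$, $W(1,1)$, $W(-1,1)$, $W(2,-1)$ appearing in (3)--(7) as fixed initial constants of weight $1$, since they are stored once from the initial block rather than re-read from the running window. Inspection then shows that each of (2)--(3) is a difference of monomials consisting of exactly four current $W(\cdot,0)$ factors, producing output of weight $a^4$, and each of (4)--(7) is a difference of monomials consisting of exactly two current $W(\cdot,0)$ factors and two current $W(\cdot,1)$ factors, producing output of weight $a^2b^2$. Specialising $(a,b)=(\alpha,1)$ yields the per-step claim of Part~1 ($\alpha^4$ on the new first vector, $\alpha^2$ on the new second vector), while $(a,b)=(\alpha,\alpha)$ yields Part~2's claim that every entry of the next state is rescaled by $\alpha^4$.

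Iterating this bookkeeping through the rest of the loop is straightforward: both the Double and the DoubleAdd branch use the same monomial structures, so the weights evolve under $(a,b)\mapsto (a^4,a^2b^2)$ independently of the bit pattern. Starting from $(\alpha,1)$ the two coordinates stay within powers of $\alpha$; starting from $(\alpha,\alpha)$ they remain equal throughout. Hence at the end of the loop $W(r+1,0)$ and $W(r+1,1)$ carry scaling factors $\gamma_0,\gamma_1$, both powers of $\alpha$, with $\gamma_0=\gamma_1$ in Part~2, while the fixed terms $W(1,0),W(1,1)$ appearing in Theorem~\ref{th1} lie outside every later update window and are unaffected. Thus by Theorem~\ref{th1} the unreduced pairing is multiplied by the single factor $\gamma_1/\gamma_0$. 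In Part~2 this equals $1$, so the reduced pairing and its variants are unchanged without needing any subfield hypothesis. In Part~1 the factor is a power of $\alpha$, which lies in the proper subfield $\mathbb{F}_{q^{k/d}}$ by hypothesis; I would close by invoking the standard consequence of minimality of the embedding degree $k$, namely $r\nmid q^{k/d}-1$, hence $q^{k/d}-1 \mid (q^k-1)/r$, so every element of $\mathbb{F}_{q^{k/d}}^{*}$ is killed by the final exponentiation. The same argument extends verbatim to the Ate and Optimal Ate variants, whose final exponent is again $(q^k-1)/r$.

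The main obstacle I foresee is essentially notational bookkeeping: one must be careful to treat $W(2,0)$, $W(1,1)$, $W(-1,1)$, $W(2,-1)$ consistently as fixed initial constants in (3)--(7) (and hence of weight $1$) even when their indices happen to fall inside the current window at early iterations, and to check that the monomial-degree count in each recurrence works uniformly for both output vectors simultaneously. Once that is pinned down, both parts reduce to pure homogeneity plus the subfield-kills-in-the-final-exponentiation lemma.
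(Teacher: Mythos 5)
Your proposal is correct and follows essentially the same route as the paper's proof: verify that the recurrences (2)--(7) are homogeneous of degree four in the first-vector entries and of bidegree $(2,2)$ in the two vectors (with $W(2,0)$, $W(1,1)$, $W(-1,1)$, $W(2,-1)$ held as fixed initial constants), then conclude via Theorem~\ref{th1} that the pairing value changes only by a power of $\alpha$, which cancels outright in Part~2 and is annihilated by the final exponentiation in Part~1 when $\alpha$ lies in a proper subfield. Your weight bookkeeping $(a,b)\mapsto(a^4,a^2b^2)$ and the explicit divisibility argument $q^{k/d}-1\mid(q^k-1)/r$ merely systematize steps the paper states more informally.
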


\begin{proof}
	Let us consider $\hat W(2i-1,0)$ first. 
	
	Note that the recursive formula for $W(2i-1,0)$ is  
	\begin{equation}
		\begin{aligned}
			W\!(2i\!-\!1, 0) &= W\!(i\! + \!1, 0)W\!(i\! - \!1, 0)^3\! -\! W\! (i\! -\! 2, 0)W \!(i, 0)^3.
		\end{aligned}
	\end{equation}
	We multiply $W(\lambda,0)_{i-3\leq \lambda\leq i+3}$ by $\alpha$, then the new updated $\hat{W}(2i-1,0)$ should be 
	\begin{equation}
		\begin{aligned}
			\hat{W}\!(2i\!-\!1, 0) &=\alpha^4( W\!(i\! + \!1, 0)W\!(i\! - \!1, 0)^3\! -\! W\! (i\! -\! 2, 0)W \!(i, 0)^3)\\
			&=\alpha^4\cdot W(2i-1,0). 
		\end{aligned}
	\end{equation}
	
	Similarly, we can show that the new updated 
	$\hat{W}(2i,0)=\alpha^4\cdot W(2i,0).$
	This finishes the proof for the first assertion.
	
	Then we consider the second vector. Note that there are only two values of the first vector involved for computing each $W(\lambda,1)_{2i-1\leq \lambda \leq  2i+2}$. The new updated $\hat{W}(\lambda,1)_{2i-1\leq \lambda \leq 2i+2}$ will be multiplied by $\alpha^2$.
	
	Therefore, the value of the new pairing is equal to the product of the original pairing value and a fixed power of $\alpha$. However, if the constant $\alpha$ is chosen to be in a proper subfield of $\mathbb{F}_{q^k}$, then the final exponentiation will eliminate the value of the fixed power of the constant $\alpha$. So the value of the reduced Tate pairing or its variants can not be changed even if  all the values of $W(\lambda,0)_{i-3\leq \lambda \leq i+3}$  in the state are multiplied by a non-zero fixed value $\alpha$. 
	
	Now we prove the second part of this theorem. 
	In Theorem~\ref{th1}, we know that
	\begin{equation*}
		f_{r,P}(D_Q)=\frac{W_{P,Q}(r+1,1)W_{P,Q}(1,0)}{W_{P,Q}(r+1,0)W_{P,Q}(1,1)}.
	\end{equation*}
	
	If we multiply $W(\lambda,0)_{i-3 \leq \lambda \leq i+3}$ and $W(\lambda,1)_{i-1 \leq \lambda \leq i+1}$ by $\alpha$, where $\alpha$ is any non-zero value, then we have:
	\begin{equation*}
		\begin{aligned}
			f_{r,P}(D_Q)&=\frac{\alpha^\ell W_{P,Q}(r+1,1)W_{P,Q}(1,0)}{\alpha^\ell W_{P,Q}(r+1,0)W_{P,Q}(1,1)}\\
			&=\frac{W_{P,Q}(r+1,1)W_{P,Q}(1,0)}{W_{P,Q}(r+1,0)W_{P,Q}(1,1)}\\
		\end{aligned}
	\end{equation*}
	for some integer $\ell$. This means that if we multiply all values in the updating block by a fixed non-zero value, the ratio of $\frac{W_{P,Q}(r+1,1)}{W_{P,Q}(r+1,0)}$ can not be changed.\qed
\end{proof}

\begin{remark}
	We just consider the situation at the Double step. At the the  DoubleAdd step, the conclusion can be verified similarly. 
\end{remark}
\begin{remark}
	Theorem~\ref{th3} can also be applied for any pairing-friendly curves while we may have to multiply both dimension of vectors. In this situation, we cost $8$ multiplications and the result can not be changed.
\end{remark}

Until now, we have shown how to replace the inverse operation by several multiplications. For some popular pairing-friendly curves, we have a friendly situation. Take the BLS12 curve we used in this work as an example, then there is a proposition which is helpful to our algorithm. The related parameters of the BLS12 curve can be seen in Section~\ref{sec6.1} and the towering scheme is shown as follows.
\begin{itemize}
	\item $\mathbb F_{q^2}=\mathbb F_q[u]/\langle u^2-\beta \rangle$, where $\beta = -1$;
	\item $\mathbb F_{q^6}=\mathbb F_{q^2}[v]/\langle v^3-\xi \rangle$, where $\xi = u+1$;
	\item $\mathbb F_{q^{12}}=\mathbb F_{q^6}[\omega]/\langle \omega^2-v \rangle $.
\end{itemize}

\begin{proposition}\label{pr2}
	Choose $P \!\in E_0(\mathbb{F}_q)$ and $Q'\!=(x_Q,y_Q) \!\in E_0'(\mathbb{F}_{q^2})$. 
	
	For $W_{\Psi_6(Q'),P}(s,0)(s\in \mathbb{Z})$ , if $s$ is odd, then $W_{\Psi_6(Q'),P}(s,0)$ is in the proper subfield of $\mathbb{F}_{q^{12}}$; If $s$ is even, then $W_{\Psi_6(Q'),P}(s,0)$ belongs to  $ \mathbb{F}_{q^{12}}$. Furthermore, let $W_{\Psi_6(Q'),P}(s,0)=a_0+a_1\omega,\,a_0,a_1\in \mathbb{F}_{q^6}$, $a_0=0$ if $s$ is even.
\end{proposition}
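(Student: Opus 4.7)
The plan is to exploit the explicit form of $\Psi_6$ together with the rank-1 interpretation of $W_{\Psi_6(Q'),P}(s,0)$, which coincides with the value of the $s$-th division polynomial of $E$ evaluated at the point $\Psi_6(Q')=(X,Y)$ with $X=\omega^{2}x_Q$ and $Y=\omega^{3}y_Q$. The key field-theoretic facts are $\omega^{2}=v\in\mathbb{F}_{q^{6}}$ and $\omega^{3}=v\omega\in\omega\mathbb{F}_{q^{6}}$; combined with $x_Q,y_Q\in\mathbb{F}_{q^{2}}\subset\mathbb{F}_{q^{6}}$, this gives $X\in\mathbb{F}_{q^{6}}$ and $Y\in\omega\mathbb{F}_{q^{6}}$. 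The goal thus reduces to showing that $W(s,0)$ lies in $\mathbb{F}_{q^{6}}$ when $s$ is odd and in $\omega\mathbb{F}_{q^{6}}$ when $s$ is even; both conclusions of the proposition then follow at once.

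I would first handle the base terms used in Algorithm~\ref{alg1}: direct substitution gives $W(1,0)=1\in\mathbb{F}_{q^{6}}$, $W(2,0)=2Y\in\omega\mathbb{F}_{q^{6}}$, $W(3,0)=3X^{4}+6AX^{2}+12BX-A^{2}\in\mathbb{F}_{q^{6}}$, and $W(4,0)=4Y\cdot p(X;A,B)\in\omega\mathbb{F}_{q^{6}}$. The odd-indexed base values are polynomials in $X$ with $\mathbb{F}_q$-coefficients, while the even-indexed ones each carry a single factor of $Y$, so both sub-claims hold at the base.

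Next, I would run an induction on $|s|$ using the recurrences (\ref{eq2}) and (\ref{eq3}). Viewing $\mathbb{F}_{q^{12}}=\mathbb{F}_{q^{6}}\oplus\omega\mathbb{F}_{q^{6}}$ as a $\mathbb{Z}/2$-graded $\mathbb{F}_{q^{6}}$-algebra and assigning $W(j,0)$ the degree $j\bmod 2$, both recurrences are homogeneous of the correct degree. In (\ref{eq2}), each of the monomials $W(i+1,0)W(i-1,0)^{3}$ and $W(i-2,0)W(i,0)^{3}$ involves four indices of matching parity and therefore carries total $\omega$-parity $\equiv 2i-1\pmod 2$, as required for the left-hand side. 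In (\ref{eq3}), the numerator has degree $0$ (every factor appears in a pair of matching parity, and the squared factors automatically contribute even parity), and division by $W(2,0)\in\omega\mathbb{F}_{q^{6}}^{*}$ shifts the degree by one since $\omega^{-1}=\omega/v\in\omega\mathbb{F}_{q^{6}}$, placing $W(2i,0)$ into $\omega\mathbb{F}_{q^{6}}$ as needed.

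The main technical point is the parity bookkeeping, in particular verifying that division by $W(2,0)$ in (\ref{eq3}) preserves the $\mathbb{Z}/2$-grading; this hinges on the observation that $\omega^{-1}$ still lies in the odd piece $\omega\mathbb{F}_{q^{6}}$, which is specific to the tower $\mathbb{F}_{q^{12}}=\mathbb{F}_{q^{6}}[\omega]/\langle\omega^{2}-v\rangle$ with $v\in\mathbb{F}_{q^{6}}^{*}$. Once this is established, the graded-induction argument closes uniformly and yields both the subfield claim for odd $s$ and the vanishing of the $\omega^0$-component $a_0$ for even $s$.
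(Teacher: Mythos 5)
Your proof is correct in substance but takes a genuinely different route from the paper. The paper simply cites the classical structure theorem for division polynomials (for odd $n$, $\psi_n\in\mathbb{Z}[x,y^2,A,B]$; for even $n$, $\psi_n\in 2y\,\mathbb{Z}[x,y^2,A,B]$, referring to the literature) and then combines it with the observation that the $x$-coordinate of $\Psi_6(Q')$ lies in $\mathbb{F}_{q^6}$ while the $y$-coordinate lies in $\omega\mathbb{F}_{q^6}$ --- exactly your opening field-theoretic reduction. You instead re-derive that structure theorem from scratch by a $\mathbb{Z}/2$-graded induction on the elliptic-net doubling recurrences~(\ref{eq2}) and~(\ref{eq3}), including the check that division by $W(2,0)$ respects the grading because $\omega^{-1}=\omega/v$ stays in the odd component. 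Your version is self-contained and makes transparent exactly which property of the tower $\mathbb{F}_{q^{12}}=\mathbb{F}_{q^6}[\omega]/\langle\omega^2-v\rangle$ is being used; the paper's version is shorter and delegates the combinatorial content to a known fact about division polynomials.

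One bookkeeping error you should fix: you assign $W(j,0)$ the degree $j\bmod 2$, but your own base cases ($W(1,0),W(3,0)\in\mathbb{F}_{q^6}$, i.e.\ degree $0$; $W(2,0),W(4,0)\in\omega\mathbb{F}_{q^6}$, i.e.\ degree $1$) force the convention $\deg W(j,0)=(j+1)\bmod 2$. With your stated convention the parity check for~(\ref{eq2}) actually fails: $W(i+1,0)W(i-1,0)^3$ would have degree $(i+1)+3(i-1)=4i-2\equiv 0$, not $2i-1\equiv 1$ as you assert. Under the corrected convention everything closes: both monomials in~(\ref{eq2}) have degree $(i+2)+3i=4i+2\equiv 0$, matching $\deg W(2i-1,0)=0$, and the numerator of~(\ref{eq3}) has degree $4i+4\equiv 0$, so dividing by the degree-$1$ element $W(2,0)$ gives degree $1$, matching $\deg W(2i,0)=1$. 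This is a convention slip rather than a missing idea, but as written two of your intermediate parity claims are false.
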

\begin{proof}
	We abbreviate $W_{\Psi_6(Q'),P}(s,0)$ to $W_{\Psi_6(Q')}(s,0)$.
	
	Note that $W_{\Psi_6(Q')}(s,0)=\psi_s\in \mathbb{Z}[x,y,A,B]$, where $\psi_s$ is a division polynomial.
	Therefore, we just verify the proposition in two situations according to Section~3.2 in \cite{2008Elliptic}:
	
	\begin{enumerate}
		\item  Assume that $s$ is odd, then $\psi_s$ is a polynomial in $\mathbb{Z}[x,y^2,A,B]$. For the short Weierstrass curve $y^2=x^3+Ax+B$, $y^2$ can be replaced by polynomials in $x$. Furthermore, $Q'\in E'$ and the $x$-coordinate of $\Psi_6(Q')$ is $x_Qv \in \mathbb{F}_{q^6}$. Therefore, $W_{\Psi_6(Q')}(s,0)$ is always in a proper subfield of $\mathbb{F}_{q^{12}}$.
		\item If $s$ is even, then $\psi_s$ is a polynomial in $2y\mathbb{Z}[x,y^2,A,B]$. And the $y$-coordinate of  $\Psi_6(Q')$ is $y_Qv\omega \in \mathbb{F}_{q^{12}}$, so $\psi_s\in 2y\mathbb{Z}[x,y^2,A,B]$ can be written as $a_1\omega$, where $a_1\in \mathbb{F}_{q^6}$. 
	\end{enumerate}\qed
\end{proof}

From Proposition~\ref{pr2}, if $W(\lambda,0)_{i-3 \leq \lambda \leq i+3}$ are multiplied by $\alpha=a_1\omega$, where $a_1\in \mathbb{F}_{q^6}$, then both $\alpha^2$ and $\alpha^4$ will always be in $\mathbb{F}_{q^6}$. From Theorem~\ref{th3}, in the next state the value of $W_{\Psi_6(Q'),P}(2s,0)$ or $W_{\Psi_6(Q'),P}(2s+1,0)$ will be multiplied by $\alpha^4$. And $W_{\Psi_6(Q'),P}(2s,1)$ or $W_{\Psi_6(Q'),P}(2s+1,1)$ will be multiplied by $\alpha^2$. Therefore, if the last iteration is the doubling step, then the value of the reduced Tate pairing or its variants can not be changed.

Moreover, the last iteration of the Miller loop on the BLS12 curve will always invoke the doubling step. Hence, we can avoid the inversion in the IENA. This means that we can use $5$ multiplications to eliminate $1$ inversion. Therefore, the effect of our method always works well when the cost of $1$ inverse operation is more than that of $5$ multiplications. 

For the situation of scalar multiplication algorithm based on elliptic nets, we have the following corollary. 
\begin{cor}\label{cor1}
	Choose $P=(x_P,y_P)\in E_0(\mathbb{F}_q)$. Let $W(\lambda,0)_{i-3\leq \lambda \leq i+4}\in \mathbb{F}_{q}$ be the current state of an elliptic net which is associate to $E_0,\,P$. For $\alpha \in \mathbb{F}_{q}^*$, if $W(\lambda,0)_{i-3\leq \lambda \leq i+4}$ are multiplied by $\alpha$, i.e.,  $\hat W(\lambda,0)_{i-3\leq \lambda \leq i+4}=\alpha\cdot W(\lambda,0)_{i-3\leq \lambda \leq i+4}$. Then the value of the scalar multiplication $[n]P=(x_{nP},y_{nP})(n \in \mathbb{Z}^+)$ can not be changed.
\end{cor}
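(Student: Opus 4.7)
The plan is to reduce the corollary to two ingredients: the scaling-propagation behaviour of elliptic nets already made precise in Theorem~\ref{th3}, and a degree-zero homogeneity property of the coordinate formulas in Equation~(\ref{eq2.3}). Since the scalar multiplication formula depends only on the first (rank-$1$) vector of the block, I only need to track how the uniform multiplicative perturbation by $\alpha$ propagates through subsequent Double and DoubleAdd updates.

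First I would argue by induction on the number of iterations of the Double-and-Add algorithm performed after the perturbation that at every subsequent state the entries of the current block remain uniformly scaled by a single common power of $\alpha$. The base case is the hypothesis $\hat W(\lambda,0)=\alpha\cdot W(\lambda,0)$. For a Double update, Theorem~\ref{th3}(1) restricted to the first vector gives that a uniform factor $\beta$ on the block centred on $j$ becomes a uniform factor $\beta^{4}$ on the block centred on $2j$; this is because each of Equations~(\ref{eq2})--(\ref{eq3}) is a monomial expression of total degree $4$ in the current block, with the initial constant $W(2,0)$ left untouched by the perturbation. The DoubleAdd step can be handled in the same way, as indicated in the remark following Theorem~\ref{th3}: the extra recurrence producing the new boundary entry (for instance Equation~(\ref{eq11})) has the same total degree once one accounts for the unscaled initial constants, so it yields a uniform factor $\beta^{4}$ on the entire new block as well.

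Consequently the terminal block centred on $n$ carries a single common factor $\beta=\alpha^{m}$ on each of the five entries $W(n-2,0),W(n-1,0),W(n,0),W(n+1,0),W(n+2,0)$ that appear in Equation~(\ref{eq2.3}). Substituting the perturbed values, the numerator and denominator of $W(n-1,0)W(n+1,0)/W(n,0)^{2}$ each acquire a factor $\beta^{2}$, and those of $(W(n-1,0)^{2}W(n+2,0)-W(n+1,0)^{2}W(n-2,0))/W(n,0)^{3}$ each acquire a factor $\beta^{3}$; the fixed constant $4y_{P}$ in the denominator of $y_{nP}$ is not touched. Both $\beta$-factors therefore cancel, and $\hat x_{nP}=x_{nP}$, $\hat y_{nP}=y_{nP}$, which is the required invariance.

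The main obstacle is the bookkeeping of the inductive step: one must confirm that the same common power of $\alpha$ is installed on every entry of every intermediate block, especially through the DoubleAdd step whose update formulas have slightly different normalisations by the initial constants. This reduces to checking, one formula at a time, that each relevant recurrence is homogeneous of a single total degree in the block entries, with any initial constants left unscaled. Once this monomial accounting is done, no further idea beyond that used in Theorem~\ref{th3} is required.
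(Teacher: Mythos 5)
Your argument is correct and follows essentially the same route as the paper's own proof: propagate the uniform scaling factor through the iterations via Theorem~\ref{th3}, then cancel it using the degree-zero homogeneity of the coordinate formulas in Equation~(\ref{eq2.3}). You merely make explicit the induction over Double/DoubleAdd steps and the degree bookkeeping that the paper compresses into the phrase ``for some integer $l$''.
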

\begin{proof}
	From Theorem~\ref{th3}, we know that in the next state each $W(\lambda,0)$ will be multiplied by $\alpha^4$. According to Equation~(\ref{eq2.3}), we have the following formulae for some integer $l$:
	\begin{equation*}
		\begin{aligned}
			x_{nP}&=x_P-\frac{\alpha^{2l}W(n-1)W(n+1)}{\alpha^{2l}W(n)^2},\\
			&=x_P-\frac{W(n-1)W(n+1)}{W(n)^2},\\		
			y_{nP}&=\frac{\alpha^{3l}(W(n-1)^2W(n+2)-W(n+1)^2W(n-2))}{\alpha^{3l}4y_PW(n)^3},\\
			&=\frac{W(n-1)^2W(n+2)-W(n+1)^2W(n-2)}{4y_PW(n)^3}.
		\end{aligned}
	\end{equation*}\qed
\end{proof}
Besides, we can have a further improvement based on the algorithm in \cite{8890309}. Their Double-and-Add algorithm is improved by using some tricks to save $2$ multiplications at each iteration, but it involves $6$ right-shift operations. We can replace these operations by $2$ left-shift operations.

\section{The Elliptic Net Algorithm on the Twisted Curve}\label{sec4}
The application of the twisted curve brings some significant improvements in Miller's algorithm. However, if we use the twist trick on the original curve with the Elliptic Net algorithm, then it will not be as portable and intuitive as Miller's algorithm.
In 2010, Costello \emph{et al.} \cite{2010Faster} proposed the Ate pairing entirely on the twisted curve for Miller's algorithm. Actually, when we use the Elliptic Net algorithm to compute pairings, it will also have a good improvement if the related parameters all on the twisted curve. In this section, we will analyze the Ate pairing and the Optimal Ate pairing on the twisted curve with our method and apply our work to the Elliptic Net algorithm.
\subsection{The Ate Pairing on the Twisted Curve}
Let $E$ be an elliptic curve over $\mathbb{F}_q$, and the related parameters are defined in Section~\ref{sec2.1}. Let $E'/\mathbb{F}_{q^{e}}$ be the twist of $E$ of degree $d$ with $e=k/d$. Let $\pi'_{q^e}$ be the $q^e$-power Frobenius map on $E'$. There exists an isomorphism $\Psi_d:E'\rightarrow E$ over $\mathbb{F}_{q^k}$, then we can define two groups
$$\mathbb G_1' \triangleq E'[r]\bigcap Ker(\pi' _{q^e} -[1]),\,\mathbb G_2' \triangleq E'[r]\bigcap Ker(\pi' _{q^e} -[q^{e}]).$$
Actually, the iterations of the Miller loop $T$ can be set as $(t-1)\,mod\,r$~\cite{Matsuda2007Optimised}. When we compute the Ate pairing, the operations are all on the original curve. In the following part, we will give a new derivation of the theorem about pairings entirely on the twisted curve.
\begin{theorem}\label{th3.1}
	For $\Psi_d^{-1}(P) \in \mathbb{G}_2',\,Q' \in \mathbb{G}_1'$, we can define a pairing on $\mathbb{G}_1' \times \mathbb{G}_2'$ if $r\nmid (T^k-1)/r$:
	$$\begin{aligned}
		Ate_{E'}:\mathbb G_1' \times \mathbb G_2' &\rightarrow \mu_r\\ (Q',\Psi_d^{-1}(P))&\mapsto Ate_{E'}(Q',\Psi_d^{-1}(P))=(f_{T,Q'}(\Psi_d^{-1}(P)))^{{q^k-1}/r}.
	\end{aligned}$$
\end{theorem}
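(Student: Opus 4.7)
The plan is to derive Theorem~\ref{th3.1} directly from the classical Ate pairing on $E$ by pulling back its Miller function along the twist isomorphism $\Psi_d$. The argument is purely divisor-theoretic and does not have to track line functions through the Miller loop, which is precisely the ``independence from Miller's algorithm'' highlighted in the introduction.

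First I would transport the twisted data to $E$. Put $Q:=\Psi_d(Q')$ and note $P=\Psi_d(\Psi_d^{-1}(P))$. Because $\Psi_d$ is defined over $\mathbb{F}_{q^k}$ and satisfies $\Phi_d=\Psi_d^{-1}\circ\pi_q\circ\Psi_d$, a standard property of twists yields a group isomorphism $\Psi_d:\mathbb{G}_1'\xrightarrow{\sim}\mathbb{G}_2$, so $Q\in\mathbb{G}_2$ and $P\in\mathbb{G}_1$. The hypothesis $r\nmid (T^k-1)/r$ therefore lets me invoke the classical Ate pairing recalled in Section~\ref{sec2.1}, according to which $Ate_E(Q,P)=f_{T,Q}(P)^{(q^k-1)/r}$ is bilinear and non-degenerate.

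Next I would transfer the Miller function via pull-back. Since $\Psi_d$ is an isomorphism of degree one, $\Psi_d^{\ast}$ sends a point divisor $(R)$ on $E$ to $(\Psi_d^{-1}(R))$ on $E'$ and commutes with taking ${\rm Div}$. Applied to ${\rm Div}(f_{T,Q})=T(Q)-([T]Q)-(T-1)(\infty_E)$ this gives
$${\rm Div}(f_{T,Q}\circ\Psi_d)=T(Q')-([T]Q')-(T-1)(\infty_{E'}),$$
which is exactly the divisor of $f_{T,Q'}$ on $E'$. Fixing the normalization $f_{T,Q'}:=f_{T,Q}\circ\Psi_d$ yields the pointwise identity $f_{T,Q}(P)=f_{T,Q'}(\Psi_d^{-1}(P))$, and raising to the $(q^k-1)/r$-th power produces
$$Ate_{E'}(Q',\Psi_d^{-1}(P))=Ate_E(Q,P).$$
Bilinearity and non-degeneracy of $Ate_{E'}$ are then inherited from $Ate_E$ through the group isomorphism $\Psi_d$.

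The main technical subtlety I anticipate is the normalization of the Miller function. A priori one only has $f_{T,Q}\circ\Psi_d=c\cdot f_{T,Q'}$ for some non-zero constant $c$, so without care the final identity would pick up a spurious factor $c^{(q^k-1)/r}$ that is not obviously trivial; absorbing $c$ into the very definition of $f_{T,Q'}$ sidesteps this, but the choice has to be used consistently when the theorem is later translated into the elliptic-net recurrences in Section~\ref{sec4}.
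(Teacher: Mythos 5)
Your argument is essentially the paper's own proof: both establish $f_{T,\Psi_d(Q')}\circ\Psi_d=f_{T,Q'}$ by pulling back the divisor of the Miller function along the degree-one isomorphism $\Psi_d$ and then evaluating at $\Psi_d^{-1}(P)$. The only difference is that you explicitly flag the multiplicative-constant ambiguity between functions with equal divisors and resolve it by normalization, a point the paper's proof passes over silently; otherwise the two proofs coincide.
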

\begin{proof}
	We only need to prove that $f_{T,\Psi_d(Q')}=f_{T,Q'}\circ \Psi_d^{-1}$, for all $Q'\in \mathbb{G}_1'$.
	
	The divisor of $f_{T,\Psi_d(Q')}$ is
	$${\rm Div}(f_{T,\Psi_d(Q')})=T(\Psi_d(Q'))-([T]\Psi_d(Q'))-(T-1)(\infty),$$
	and since $\Psi_d$ is an isomorphism,
	$$\begin{aligned}
		(\Psi_d)^*{\rm Div}(f_{T,\Psi_d(Q')})&=T(\Psi_d)^*(\Psi_d(Q'))-(\Psi_d)^*([T]\Psi_d(Q'))-(T-1)(\Psi_d)^*(\infty),\\
		&=T(Q')-([T]Q')-(T-1)(\infty),\\
		&=(f_{T,Q'}).
	\end{aligned}$$
	Furthermore, we have $(\Psi_d)^*{\rm Div}(f_{T,\Psi_d(Q')})={\rm Div}(f_{T,\Psi_d(Q')}\circ \Psi_d)$, then we have $f_{T,\Psi_d(Q')} \circ \Psi_d=f_{T,Q'}$. We compose the formula with $\Psi_d^{-1}$ on both sides, and obtain:
	$$f_{T,\Psi_d(Q')}(P)=f_{T,Q'}\circ \Psi_d^{-1}(P).$$\qed
\end{proof}

\begin{remark}
	When we compute pairings on the twisted curves, the operations are always in the field where $Q'$ is located. The final value we need can be obtained by twists. The transformation involved here is very small for Miller's algorithm. This is because each transformation only needs to be multiplied by a fixed value $\alpha$ on $\mathbb{F}_{q^k}$. Generally, $\alpha$ is sparse. But for the Elliptic Net algorithm, if we adopt the same idea to use this isomorphism, then the value of $\alpha$ will be changed as the iterations, which means that the transformation of the value we need will not be a friendly process. Therefore, we choose to compute pairings on the twisted curve for the Elliptic Net algorithm.
\end{remark}

\subsection{The Optimal Ate Pairing on the Twisted Curve}
For the Optimal Ate pairing on the twisted curve, the situation is more complicated than that of the Ate pairing while we can still derive the following theorem easily.
\begin{theorem}\label{the3.3}
	Let $\lambda = mr$ with $r\nmid m$ and $\lambda = \sum_{i=0}^{\varphi(k)}c_iq^i$. Define  $$\Phi_{d,i}=\Psi_d^{-1}\circ [c_iq^i]\circ \Psi_d,$$ and note that $\Phi_{d,s_i}=\Psi_d^{-1}\circ [s_i]\circ \Psi_d$, where $s_i=\sum_{j=i}^{\varphi(k)}c_jq^j$.  There exists a pairing on $\mathbb{G'}_1 \times \mathbb{G'}_2$:
	$$\begin{aligned}
		Opt_{E'}:\mathbb{G'}_1 \times \mathbb{G'}_2 &\rightarrow \mu_r\\
		(Q',\Psi_d^{-1}(P))&\!\longmapsto (\!\prod_{i=0}^{\varphi(k)}\!f_{c_i,Q'}^{q^i}(\Psi_d^{-1}\!(P))\!\cdot \!\prod_{i=0}^{\varphi(k)-1}\!\frac{l_{\Phi_{d,s_{i+1}} ,\Phi_{d,i}(Q')}}{v_{\Phi_{d,s_i} (Q')}}(\Psi_d^{-1}(P)))^{(q^k-1)/r}.
	\end{aligned}
	$$
\end{theorem}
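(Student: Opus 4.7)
The plan is to imitate the divisor pull-back argument of Theorem~\ref{th3.1} applied termwise inside Vercauteren's formula from Theorem~\ref{th2.2}. The target identity to establish (before taking the final exponentiation) is
\[
\prod_i f_{c_i,Q}^{q^i}(P)\cdot \prod_i \frac{l_{[s_{i+1}]Q,[c_iq^i]Q}(P)}{v_{[s_i]Q}(P)} \;=\; \prod_i f_{c_i,Q'}^{q^i}(\Psi_d^{-1}(P))\cdot \prod_i \frac{l_{\Phi_{d,s_{i+1}}(Q'),\Phi_{d,i}(Q')}(\Psi_d^{-1}(P))}{v_{\Phi_{d,s_i}(Q')}(\Psi_d^{-1}(P))}
\]
up to a factor lying in a proper subfield of $\mathbb{F}_{q^k}$, which disappears after raising to the power $(q^k-1)/r$. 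Everything else then follows from Theorem~\ref{th2.2}.

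First I would treat each Miller function. Since $Q=\Psi_d(Q')$ and $\Psi_d$ is an isomorphism, the exact argument of Theorem~\ref{th3.1}, now applied with the integer $c_i$ in place of $T$, yields $f_{c_i,\Psi_d(Q')}\circ \Psi_d = f_{c_i,Q'}$ (equality of rational functions whose divisors agree under the pull-back $(\Psi_d)^*$). Evaluating at $P$ and then taking $q^i$-th powers gives $f_{c_i,Q}^{q^i}(P)=f_{c_i,Q'}^{q^i}(\Psi_d^{-1}(P))$, which handles the first product.

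Next I would deal with the addition lines and verticals. The key observation is the group-theoretic identity $\Phi_{d,i}(Q')=\Psi_d^{-1}([c_iq^i]Q)$ and $\Phi_{d,s_i}(Q')=\Psi_d^{-1}([s_i]Q)$, which follows immediately from the definitions. The line $\ell:=l_{[s_{i+1}]Q,[c_iq^i]Q}$ on $E$ has divisor $([s_{i+1}]Q)+([c_iq^i]Q)+(-[s_i]Q)-3(\infty)$, and applying $(\Psi_d)^*$ sends this divisor exactly to the divisor of $l_{\Phi_{d,s_{i+1}}(Q'),\Phi_{d,i}(Q')}$ on $E'$; since both rational functions have the same divisor on $E'$, they agree up to a scalar in $\mathbb{F}_{q^k}^*$. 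The identical argument, with the divisor $2([s_i]Q)+(-[s_i]Q)-3(\infty)$, handles $v_{[s_i]Q}$ and produces $v_{\Phi_{d,s_i}(Q')}\circ\Psi_d^{-1}$. Evaluating at $\Psi_d^{-1}(P)$ and substituting into Vercauteren's formula gives the claimed expression.

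The main obstacle I anticipate is controlling the scalar constants introduced by the equality-of-divisors step: strictly speaking the pull-back identities are equalities of functions only up to multiplicative constants. These constants depend only on a normalization coming from the fixed twisting parameter $\xi$ (so they sit in $\mathbb{F}_{q^{k/d}}^\ast$, a proper subfield of $\mathbb{F}_{q^k}^\ast$), and since the exponent $(q^k-1)/r$ annihilates $\mathbb{F}_{q^{k/d}}^\ast$, they drop out after the final exponentiation. Once this is checked, bilinearity and non-degeneracy transfer for free: the map $(Q',\Psi_d^{-1}(P))\mapsto(\Psi_d(Q'),P)=(Q,P)$ is a group isomorphism $\mathbb{G}'_1\times\mathbb{G}'_2\to\mathbb{G}_2\times\mathbb{G}_1$, and the value of $Opt_{E'}$ equals that of $Opt_E$, so the non-degeneracy hypothesis stated in Theorem~\ref{th2.2} suffices to conclude.
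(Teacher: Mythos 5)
Your proposal follows essentially the same route as the paper's own proof: pull back the divisors of each Miller function $f_{c_i,\Psi_d(Q')}$, each line $l_{[s_{i+1}]Q,[c_iq^i]Q}$, and each vertical $v_{[s_i]Q}$ along the isomorphism $\Psi_d$ and identify them with the corresponding functions attached to $\Phi_{d,s_{i+1}}(Q')$ and $\Phi_{d,i}(Q')$ on $E'$. You are in fact slightly more careful than the paper, which passes directly from equality of divisors to equality of functions, whereas you correctly flag the multiplicative constants and observe that they are absorbed by the final exponentiation.
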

\begin{proof}
	From Theorem~\ref{th3.1}, we have
	$${\rm Div}(\prod_{i=0}^{\varphi(k)}f_{c_i,Q'}^{q^i}\circ \Psi_d^{-1})={\rm Div}(\prod_{i=0}^{\varphi(k)}f_{c_i,\Psi_d(Q')}).$$
	Let $Q_i\triangleq [s_{i+1}]\circ \Psi_d(Q')$. Consider the relation between  $l_{\Phi_{d,s_{i+1}} ,\Phi_{d,i}(Q')}$ and $l_{Q_i,[c_iq^i]\Psi_d(Q')}$. 
	From the definition of divisors,
	$$
	{\rm Div}(l_{Q_i,[c_iq^i]\Psi_d(Q')})=(Q_i)+([c_iq^i]\Psi_d(Q'))+(-Q_{i+1})-3(\infty).$$
	Since $\Psi_d$ is an isomorphism,
	$$\begin{aligned}
		\Psi_d^*{\rm Div}(l_{Q_i,[c_iq^i]\Psi_d(Q')})&=(\Psi_d^{-1}(Q_i))+(\Psi_d^{-1}\circ [c_iq^i]\circ \Psi_d(Q'))+(-Q_{i+1})-3(\infty),\\
		&=(l_{\Phi_{d,s_{i+1}} ,\Phi_{d,i}(Q')}).\\
	\end{aligned}$$
	Therefore,
	$$l_{Q_i,[c_iq^i]\Psi_d(Q')}(P)=l_{\Phi_{d,s_{i+1}} ,\Phi_{d,i}(Q')}\circ \Psi_d^{-1}(P).$$
	Similarly,
	$$v_{Q_i}(P)=v_{\Phi_{d,i}}(Q')\circ \Psi_d^{-1}(P).$$\qed
\end{proof}

\begin{remark}
	For $Q'\in \mathbb{G}_1'$, we have $\pi_q\circ\Psi_d(Q')=[q]\Psi_{d}(Q')$.
	
	Since $\pi_q$ is an endomorphism and $\Psi_d$ is an isomorphism over $\mathbb{F}_{q^k}$, we have
	$$\pi_q\circ\Psi_d(Q')=\Psi_{d}\circ[q](Q').$$
	Therefore, we have 
	$$\Psi_d^{-1}\circ\pi_q\circ\Psi_{d}(Q')=[q](Q'),i.e., \Phi_{d,1}(Q')=[q](Q').$$
\end{remark}

Thus, we know that the point in $\mathbb{G}_1'$ can be mapped to a point in $E[r]$. This also means that the line function on the twisted curve via this result. 

Note that the ratio of the cost of inversions to the multiplications over $\mathbb{F}_{q^k}$ decreases if the size of $\mathbb{F}_{q^k}$ is larger. When we compute the Ate pairing on the twisted curve, our operations in the first dimension vector centered on $i$ are in $\mathbb{F}_{q^e}$. Compared with the operations in $\mathbb{F}_{q^k}$, it is more necessary to eliminate the inverse operation when the bit is not equal to $0$. Furthermore, we can use the NAF form to ensure that the density $\rho$ is within the effective range to accelerate the IENA.

\section{The Elliptic Net Algorithm with Lazy Reduction}\label{sec5}
Lazy reduction technique can also be applied to speed up the Elliptic Net algorithm. Lazy reduction was presented formally in \cite{lim2000fast}. It can save the number of modular reductions during the calculation. The main idea of lazy reduction is to put the required modular reductions of some multiplication operations like $\sum a_ib_i$ over $\mathbb{F}_q$ to the end. So these multiplication operations only need $1$ modular reduction over $\mathbb{F}_q$. Thus it can save the number of modular reductions during the calculation. In this paper, we use Montgomery reduction \cite{Montgomery1985Modular}, so the cost of a modular reduction is equal to the cost of one multiplication. Note that each item of $a_ib_i$ without modular reduction should satisfy the upper bound of Montgomery reduction.

When we use the Elliptic Net algorithm to compute pairings, it contains lots of multiplications like $A \cdot B \pm C \cdot D$, which needs 2 modular reductions normally. But if we use lazy reduction, we can only need one modular reduction. Obviously, we are not concern about violating the upper bound for this situation since we only use the lazy reduction once each time, and we set $A,B,C,D\in \mathbb{F}_q$. The proposed algorithms using lazy reduction are given for the initialization step and Double-and-Add step respectively.
We mainly improve the term $W(3,0)$ and $W(4,0)$ at the initialization step. The improvement is not obvious here, so we only give the number of modular reductions of three situations in Table~\ref{tab1}.
\begin{table}
	\caption{The Number of Modular Reductions at the Initialization Step}
	\label{tab1}
	\begin{center}
		\begin{tabular}{|l|l|l|l|}
			\hline
			Algorithm &  $A,B \neq 0$ & $B = 0$ & $A = 0$\\
			\hline
			ENA \cite{Stange2007} &  10 & 8 & 6\\
			This work & 7 & 6 & 5\\
			\hline
		\end{tabular}
	\end{center}
\end{table}

The explicit updating formulas at the Double-and-Add step are mentioned in Section~\ref{sec2.3}. The $Double(V)$ and $DoubleAdd(V)$ functions are combined with the lazy reduction technique, and we adopt 
the new Double-and-Add step in \cite{Chen2015AnIO} which needs 10 terms in total. We present the Double-and-Add algorithm based on the IENA in Appendix~\ref{ap1}. Assume that our terms belong to the finite field $\mathbb F_q$. At step [7]-[23] we compute the $Double(V)$ function. We update 7 terms in the first vector and 3 terms in the second vector that are both centered on $2i$. In the ENA, we need 42 modular reductions in each iteration. In contrast, the number of modular reductions decreases to 37 in the IENA. With the help of lazy reduction, the updating process of each term can save one modular reduction, so 10 terms will save 10 modular reductions in total. The $DoubleAdd(V)$ function is computed at step [25]-[43]. These steps contain 40 modular reductions originally and the number of modular reductions decreases to 30 with lazy reduction in each iteration. Table~\ref{tab2} shows the number of modular reductions of three Elliptic Net Algorithms at the Double-and-Add step. 
\begin{table}
	\caption{The Number of Modular Reductions at the Double-and-Add Step}
	\label{tab2}
	\begin{center}
		\begin{tabular}{|l|l|l|}
			\hline
			Algorithm &  $Double(V)$ & $DoubleAdd(V)$\\
			\hline
			ENA \cite{Stange2007} &  42 & 42\\
			IENA \cite{Chen2015AnIO} & 37 & 40\\
			This work & 27 & 30\\
			\hline
		\end{tabular}
	\end{center}
\end{table}

\section{Implementation and Analysis}\label{sec6}
In this section, we implement the optimization of the Elliptic Net algorithm for scalar multiplication and pairing computation respectively. Our algorithms are implemented by using the C programming language compiled with the GCC compiler of which the version is 7.4.0. Our code is based on version 0.5.0 of the RELIC toolkit \cite{relic-toolkit} and we use the Intel Core i7-8550U CPU processor operating at 1.80 GHz that runs on a 64-bit Linux. 
Our implementation will be divided into two parts. One is scalar multiplication, and the other one is pairing computation. We apply lazy reduction technique to both parts. Note that lazy reduction has a good acceleration effect in the Elliptic Net algorithm.

In the firsy part, we will use different methods to compare the efficiency of computing the Optimal Ate pairing on the twisted curve at 128-bit security level and 192-bit security level respectively. Notice that the $DoubleAdd(V)$ function is not friendly in the IENA. In general, we will choose the loop length which has a low Hamming weight so that we can use $Double(V)$ function more frequently in the whole iterations to accelerate the algorithm. 

In the second part, we will show the comparison of the efficiency between computing scalar multiplication in \cite{8890309} and our work. Scalar multiplication algorithm with division polynomials is similar to the ladder algorithm, and this algorithm is easily coded compared to the traditional double-and-add algorithm. It can naturally resist power attacks, but it is slower than the basic double-and-add algorithm. Therefore, we do not compare our algorithms with the state-of-the-art algorithm for standard elliptic curve scalar multiplication algorithm~\cite{fastecc}. We choose the NIST P-384 curve and the NIST P-521 curve to compute scalar multiplication respectively \cite{Hankerson2011}. Notice that for the NIST P-384 curve, the prime $p$ satisfies $\gcd(p-1,3)=1$. Therefore, we can combine works in \cite{2017A} with our work to get a further improvement on this curve.

The elliptic curves we choose are the 381-bit BLS12 and 676-bit KSS18 curves. We specify some symbols here to show the amount of operations in this section:
\begin{itemize}
	\item $M_{k}$: the multiplication over $\mathbb{F}_{q^{k}}$, $S_{k}$: the square operation over $\mathbb{F}_{q^{k}}$,
	\item $M$: the multiplication over $\mathbb{F}_{q}$,
	$S$: the square operation over $\mathbb{F}_{q}$,
	\item $I_k$: the inversion over $\mathbb{F}_{q^{k}}$,
	$A$: the addition operation over $\mathbb{F}_{q}$.
\end{itemize}

\subsection{Pairing Computation}
In the following part, we will focus on the improvement of pairing computation using the Elliptic Net algorithm.
\subsubsection{381-bit BLS12 Curve }\label{sec6.1}

The concrete parameters for the 381-bit BLS12 curve with embedding degree $k=12$ are given as follows.

\begin{itemize}
	\item $t=-2^{63} - 2^{62} - 2^{60} - 2^{57} - 2^{48} - 2^{16}$;
	\item $r = t^4- t^2 + 1$;
	\item $q = (t-1)^2(t^4-t^2+1)/3+t$;
	\item $E_0:y^2=x^3+4$ over $\mathbb F_q$;
	\item $\mathbb F_{q^2}=\mathbb F_q[u]/\langle u^2-\beta \rangle$, where $\beta = -1$;
	\item $\mathbb F_{q^6}=\mathbb F_{q^2}[v]/\langle v^3-\xi \rangle$, where $\xi = u+1$;
	\item $\mathbb F_{q^{12}}=\mathbb F_{q^6}[\omega]/\langle \omega^2-v \rangle $;
	\item the twisted curve $E_0'$ : $y^2=x^3+4\xi$ over $\mathbb F_{q^2}$.
\end{itemize}

Recall that $P\in E_0(\mathbb{F}_q)$ and $Q'\in E_0'(\mathbb{F}_{q^{e}})$. We apply three techniques discussed in this work to the ENA and the IENA for computing the Optimal Ate pairing. According to Theorem~\ref{the3.3} in Section~\ref{sec3}, the explicit formulas of line functions on the twisted BLS12 curve can be obtained. Therefore, for the BLS12 curve, we only need to calculate $$(f_{t,\Psi_6(Q')}(P))^{\frac{q^{12}-1}{r}}\,\, or \,\, (f_{t,Q'}(\Psi_6^{-1}(P))) ^{\frac{q^{12}-1}{r}}.$$
The amount of operations for $f_{t,\Psi_6(Q')}$ and $f_{t,Q'}$ in one iteration is $7S_{12}+\frac{67}{2}M_{12}$ and $6S_2+62M_2+S_{12}+\frac{3}{2}M_{12}$ at the Double step in the ENA, respectively. Note that in our implementation, $1I_{12}\approx 3M_{12}$, $1I_2\approx 13M_2$, $1M_2\approx3M$ and $1M_{12}\approx 54M$. In the IENA, we need $6S_{12}+31M_{12}+I_{12}$ without twist when the bit is not equal to $0$. If we compute pairings on its corresponding twisted curve, the operations can be reduced to $1S_{12}+1M_{12}+5S_2+39M_2+1I_2$. Without considering the influence of delay error, it is not necessary to eliminate the inverse operation if we do not use twists of elliptic curves here. But when the cost of one inversion is greater than the cost of 5 multiplications, eliminating the inverse operation can have a more obvious improvement. Moreover, since $t$ is a negative number, we choose to compute $f_{-t,Q'}$ and use the relationship $(f_{t,Q'})^{(q^{12}-1)/r}=(\dfrac{1}{f_{-t,Q'}})^{(q^{12}-1)/r}$ to revise the value. Note that in order to make the IENA work well, we choose to expand $-t$ in the NAF form to reduce the proportion of non-zero digits. Then the non-zero digits density $\rho$ will be smaller than that of the previous one. Although the Elliptic Net algorithm is much slower than Miller's algorithm, it still counts in milliseconds. Therefore, we cycle the program $10,000$ times and take the average value to ensure the stability and accuracy of our program.
The comparison about the efficiency of different methods is provided in Table~\ref{tab4}. 

\begin{table}
	\caption{Efficiency Comparison on a 381-bit BLS12 Curve}
	\label{tab4}
	\begin{center}
		\begin{tabular}{|l|l|l|}
			\hline
			Method &  Clock Cycle ($\times 10^3$)& Time (ms)\\
			\hline
			ENA \cite{Stange2007} &  {25,524} & {12.81}\\
			ENA  with lazy reduction &  {24,599} & {12.35}\\
			IENA \cite{Chen2015AnIO} &  {23,508} & {11.80}\\
			IENA with lazy reduction &  {22,586} & {11.34}\\
			IENA (Eliminate Inverse) &  {23,554} & {11.82}\\
			IENA (Eliminate Inverse) with lazy reduction&  {22,722} & {11.41}\\
			ENA  (Twist) &  {4,890} & {2.45}\\
			ENA  (Twist) with lazy reduction&  {4,463} & {2.24}\\
			IENA(Twist) &  {4,749} & {2.38}\\
			IENA(Twist) with lazy reduction&  {4,325} & {2.17}\\
			IENA(Twist $\&$ Eliminate Inv) &  {4,575} & {2.30}\\
			IENA(Twist $\&$ Eliminate Inv) with lazy reduction&  {4,315} & {2.16}\\
			Miller's algorithm & 3,123 & 1.57 \\
			\hline
		\end{tabular}
	\end{center}
\end{table}

From Table~\ref{tab4}, we can see that this work speeds up the Elliptic Net algorithm indeed and the efficiency of computing the Optimal Ate pairing on the twisted curve is much quicker than that on the original elliptic curve. The twist technology has a good performance for both algorithms. The efficiency has been increased by about $80.9\%$ without using lazy reduction in the IENA. Notice that lazy reduction also plays a vital role in the algorithm, which further accelerates the algorithm. Besides, the elimination of the inversion has also been proved to be effective which is up to $3.36\%$ faster than the IENA. 
Compared to the ENA, the efficiency of our work on the original and twisted curves increases by around  $11\%$ and $11.8\%$, respectively.
\subsubsection{676-bit KSS18 Curve}\label{sec6.3}

Now we give the parameters of the 676-bit KSS18 curve with embedding degree $k=18$ below:

\begin{itemize}
	\item $t=-2^{85}-2^{31}-2^{26}+2^6$;
	\item $r = (t^6+37t^3+343)/343$;
	\item $q = (t^8+5t^7+7t^6+37t^5+188t^4+259t^3+343t^2+1763t+2401)/21$;
	\item $E_0:y^2=x^3+2$ over $\mathbb F_q$;
	\item $\mathbb F_{q^3}=\mathbb F_q[u]/\langle u^3-\beta \rangle$, where $\beta = -2$;
	\item $\mathbb F_{q^6}=\mathbb F_{q^2}[v]/\langle v^2-\xi \rangle$, where $\xi = u$;
	\item $\mathbb F_{q^{18}}=\mathbb F_{q^6}[\omega]/\langle \omega^3-v \rangle $;
	\item the twisted curve $E_0'$ : $y^2=x^3+2/\xi$ over $\mathbb F_{q^2}$.
\end{itemize}

We need to calculate $$(f_{t,\Psi_6(Q')}\cdot f_{3,\Psi_6(Q')}^q\cdot l_{[t]\Psi_6(Q'),[3q]\Psi_6(Q')}(P))^{\frac{q^{18}-1}{r}}$$ or $$(f_{t,Q'}\cdot f_{3,Q'}^q\cdot l_{\Psi_6^{-1}\circ [t]\circ \Psi_6(Q'),\Psi_6^{-1}\circ [3q]\circ \Psi_6(Q')}(\Psi_6^{-1}(P)))^{\frac{q^{18}-1}{r}}$$  for computing the Optimal Ate pairing on this curve. In order to make our comparisons more obviously and steadily, we calculate the Optimal Ate pairing $1,000$ times, and take the average value as the final result. Table~\ref{tab5} shows the timings of different methods for computing the Optimal Ate pairing.
\begin{table}
	\caption{Efficiency Comparison on a 676-bit KSS Curve}
	\label{tab5}
	\begin{center}
		\begin{tabular}{|l|l|l|}
			\hline
			Method &  Clock Cycle ($\times 10^3$)& Time (ms)\\
			\hline
			ENA  \cite{Stange2007} &  {136,542} & {68.54}\\
			ENA  with lazy reduction &  {132,700} & {66.61}\\
			IENA \cite{Chen2015AnIO} &  {122,629} & {61.56}\\
			IENA with lazy reduction &  {119,991} & {60.23}\\
			IENA (Eliminate Inverse) &  {122,681} & {61.59}\\
			IENA (Eliminate Inverse) with lazy reduction&  {120,686} & {60.58}\\
			ENA  (Twist) &  {40,949} & {20.56}\\
			ENA  (Twist) with lazy reduction&  {39,440} & {19.80}\\
			IENA(Twist) &  {40,676} & {20.42}\\
			IENA(Twist) with lazy reduction&  {39,276} & {19.72}\\
			IENA(Twist $\&$ Eliminate Inv) &  {40,291} & {20.23}\\
			IENA(Twist $\&$ Eliminate Inv) with lazy reduction&  {38,904} & {19.53}\\
			Miller's algorithm & {17,149} & {8.61} \\
			\hline
		\end{tabular}
	\end{center}
\end{table}

On the KSS18 curve, the effect of our modification is similar to the performance on the BLS12 curve. Just comparing the performance of the ENA on the twisted curve and the original curve, the algorithm is $70\%$ faster on the twisted curve. But after eliminating the inverse operation and using lazy reduction technique, the algorithm can be about $5\%$ faster than the IENA on the twisted curve.

From these results, we find that the improvement of lazy reduction on the KSS18 curve is increased. This is mainly because the embedding degree on the KSS18 curve is bigger than that of the BLS12 curve. Besides, we have more iterations of the Miller loop on the KSS18 curve. But the amount of optimization in a single iteration is same. In contrast to our theory, the efficiency of computing the Optimal Ate pairing on the twisted curve is much higher than that on the original curve for the Elliptic Net algorithm. In addition, we can further improve the efficiency of the algorithm by eliminating the inverse operation. Notice that Miller's algorithm performs well in our implementation with the cost time of $1.57\,ms$ on the 381-bit BLS12 curve. Its version in our work is the fastest one implemented by Diego \emph{et al.} in the Relic toolkit~\cite{relic-toolkit}, and we test its efficiency in our personal computer. However, compared with the previous work, the gap between the Elliptic Net algorithm and Miller's algorithm has been greatly shortened, which from the original cost of more than 9 times to the current cost of less than 2 times.

\subsection{Scalar Multiplication}
Our work based on the scalar multiplication algorithm proposed in \cite{8890309} is to replace $6$ right-shift operations by $2$ left-shift operations in each iteration. It seems that this improvement will not work obviously. However, after we use the lazy reduction technique, the efficiency will have a good improvement. We choose two curves which achieve 192-bit security level and 256-bit security level respectively. The equations of these curves over $\mathbb{F}_q$ have the form: $y^2=x^3-3x+b$.
\subsubsection{NIST P-521 Curve}
In \cite{8890309}, the amount of operations is $24M+6S+36A$ and $6$ right-shift operations at each iteration. Let one subtraction operation or one double operation be equal to one addition operation. The trick in Section~\ref{sec3} is applied to this algorithm, and then we replace $6$ right-shift operations by $2$ left-shift operations. Table~\ref{tab8} provides the timings of scalar multiplication algorithm in \cite{8890309} and this work.

\begin{table}
	\caption{Efficiency of Scalar Multiplication on the NIST P-521 Curve}
	\label{tab8}
	\begin{center}
		\begin{tabular}{|l|l|l|}
			\hline
			Method &  Clock Cycle ($\times 10^3$) & Time (ms)\\
			\hline
			Algorithm \cite{8890309} &  {5,097} & {2.56}\\
			Algorithm \cite{8890309} with lazy reduction &  {4,844} & {2.43}\\
			This work &  {4,920} & {2.47}\\
			This work with lazy reduction &  {4,530} & {2.27}\\
			\hline
		\end{tabular}
	\end{center}
\end{table}
Our work facilitates an acceleration of around $11.2\%$ over the algorithm in \cite{8890309} of scalar multiplication.
However, the efficiency of these algorithms is slower than that of the ENA for scalar multiplication except the prime $p$ is large enough.
\subsubsection{NIST P-384 Curve}
We focus on the situation of $gcd(p-1,3)=1$ and combine the work in \cite{2017A} and \cite{8890309} to compute scalar multiplication. Let $\alpha\in \mathbb{F}_q $ such that $\alpha^3=W(2)^{-1}$. Then the initial values of an elliptic net are given below:
\begin{equation*}
	\begin{aligned}
		&\tilde{W}(1)=1,\tilde{W}(2)=1,\tilde{W}(3)=\alpha^8 \cdot W(3),\\
		&\tilde{W}(4)=\alpha^{15}\cdot W(4),\tilde{W}(5)=\tilde{W}(4)-\tilde{W}(3)^3.
	\end{aligned}
\end{equation*}
We use these new initial values above to compute scalar multiplication. The amount of operations will be reduced from $20M+6S+36A$ and $6$ right-shift operations to $18M+6S+36A$ and $2$ left-shift operations in each iteration. Table~\ref{tab9} reflects the efficiency of algorithm in \cite{8890309} and our work for computing scalar multiplication on the NIST P-384 curve. Results shown that we have an improvement based on \cite{8890309} with $14.96\%$.
\begin{table}
	\caption{Efficiency of Scalar Multiplication on the NIST P-384 Curve}
	\label{tab9}
	\begin{center}
		\begin{tabular}{|l|l|l|}
			\hline
			Method &  Clock Cycle ($\times 10^3$) & Time (ms)\\
			\hline
			Algorithm \cite{8890309}  &  {2,329} & {1.17}\\
			Algorithm \cite{8890309} with lazy reduction &  {2,208} & {1.11}\\
			This work &  {2,234} & {1.12}\\
			This work with lazy reduction &  {1,980} & {0.99}\\
			\hline
		\end{tabular}
	\end{center}
\end{table}
\section{Conclusions}\label{sec7}
In this work, we improved the Elliptic Net algorithm. Among different versions of the Elliptic Net algorithm, we analyzed their efficiency and presented higher speed records on the computation of the Optimal Ate pairing on a 381-bit BLS12 curve and a 676-bit KSS18 curve by using the Elliptic Net algorithm with several tricks, respectively. We also improved the scalar multiplication algorithm in \cite{8890309} and implemented our work on a NIST P-384 curve and a NIST P-521 curve, respectively. The scalar multiplication algorithm was increased by up to $14.96\%$ than the work in \cite{8890309}. The lazy reduction technique was able to reduce by around $27\%$ of the required modular reductions. Moreover, the application of twist technology helped us reduce the number of multiplications and the improvement was significant. Besides, the improved Elliptic Net algorithm was also further improved, i.e., the inverse operation can be replaced by few multiplications when the bit is equal to $1$ or $-1$.  On the 381-bit BLS12 curve, this work improved the performance of the Optimal Ate pairing by $80\%$ compared with the original version on a 64-bit Linux platform. The implementation on the 676-bit KSS18 curve had shown that this work was $71.5\%$ faster than the previous ones. 
Our results shown that the Elliptic Net algorithm can compute pairings efficiently on personal computers while it was still slower than Miller's algorithm. In the future, we will consider the parallelization of the Elliptic Net algorithm to get a further improvement.


%
%
%
\bibliographystyle{splncs04}
\bibliography{mybibliography,pairing}
\clearpage
\appendix
\addappheadtotoc

	\section{Algorithm in This Work}\label{ap1}
	\floatname{algorithm}{Algorithm}
	\renewcommand{\algorithmicrequire}{\textbf{INPUT:}}
	\renewcommand{\algorithmicensure}{\textbf{OUTPUT:}}
	\begin{breakablealgorithm}
		\caption{Double-and-Add Algorithm with Lazy Reduction (Eliminate Inversion)}
		\label{alg5}
		\begin{algorithmic}[1] 
			\REQUIRE Block $V$ centered on $i$ in which the first vector has 7 terms and the second vector has 3 terms. 
			$W(1,0)=W(0,1)=1$, $\alpha=W(2,0)^{-1}$, $\beta=W(-1,1)^{-1}$, $\gamma_1=W(2,-1)^{-1}$, $\delta=W(1,1)^{-1}$, $\omega_2=W(2,0)^2$, $\omega_{13}=W(1,0)W(3,0)$, $flag\in \left\{0,1\right\}.$
			\ENSURE Block centered on $2i$ if $flag=0$, centered on $2i+1$ if $flag=1$.
			\STATE $S_0 \gets V[2,2]^2\,mod\,p$, $P_0 \gets (V[2,1]*V[2,3])\,mod\,p$; //$1S_k+1M_k$
			\FOR{$i = 1 \to 5$}
			\STATE $S[i] \gets V[1,i+1]^2 \,mod\,p$, $P[i] \gets (V[1,i]*V[1,i+2])\,mod\,p$;
			\ENDFOR
			\IF {$flag=0$}
			\FOR{$j = 1 \to 3$}
			\STATE $t_0 \gets S[j]*P[j+1]$, $t_1 \gets S[j+1]*P[j]$, $V[1,2j-1] \gets (t_0-t_1)\,mod\,p$;
			\STATE $t_0 \gets S[j]*P[j+2]$, $t_1 \gets S[j+2]*P[j]$, $V[1,2j] \gets (t_0-t_1)\,mod\,p$;
			\STATE $V[1,2j] \gets (V[1,2j]*\alpha)\,mod\,p$;
			\ENDFOR
			\STATE $t_0 \gets S[4]*P[5]$, $t_1 \gets S[5]*P[4]$, $V[1,7] \gets (t_0-t_1)\,mod\,p$;
			\STATE $k_0 \gets S[2]*P_0$, $k_1 \gets P[2]*S_0$, $V[2,1] \gets (k_0-k_1)\,mod\,p$;
			\STATE $V[2,1] \gets (V[2,1]*\delta)\,mod\,p$;
			\STATE $k_0 \gets S[3]*P_0$, $k_1 \gets P[3]*S_0$, $V[2,2] \gets (k_0-k_1)\,mod\,p$;
			\STATE $k_0 \gets S[4]*P_0$, $k_1 \gets P[4]*S_0$, $V[2,3] \gets (k_0-k_1)\,mod\,p$;
			\STATE $V[2,3] \gets (V[2,3]*\beta)\,mod\,p$;
			\ELSE
			\FOR{$j = 1 \to 3$}
			\STATE $t_0 \gets S[j]*P[j+2]$, $t_1 \gets S[j+2]*P[j]$, $V[1,2j-1] \gets (t_0-t_1)\,mod\,p$;
			\STATE $V[1,2j-1] \gets (V[1,2j-1]*\alpha)\,mod\,p$;
			\STATE $t_0 \gets S[j+1]*P[j+2]$, $t_1 \gets S[j+2]*P[j+1]$, $V[1,2j] \gets (t_0-t_1)\,mod\,p$;
			\ENDFOR
			\STATE $vt_1 \gets (V[1,4]*V[1,6])\,mod\,p$, $vt_2 \gets (V[1,5]^2)\,mod\,p$; //$1M_e+1S_e$
			\STATE $t_0 \gets vt_1*\omega_2$, $t_1 \gets vt_2*\omega_{13}$, $V[1,7] \gets (t_0-t_1)\,mod\,p$; //$2M_e$
			\FOR{$j = 1 \to 6$}
			\STATE $V[1,j]=(V[1,j]*V[1,3])\,mod\,p$;
			\ENDFOR
			\STATE $k_0 \gets S[3]*P_0$, $k_1 \gets P[3]*S_0$, $V[2,1] \gets (k_0-k_1)\,mod\,p$;    
			\STATE $k_0 \gets S[4]*P_0$, $k_1 \gets P[4]*S_0$, $V[2,2] \gets (k_0-k_1)\,mod\,p$;
			\STATE $V[2,2] \gets( V[2,2]*\beta)\,mod\,p$;
			\STATE $k_0 \gets S[5]*P_0$, $k_1 \gets P[5]*S_0$, $V[2,3] \gets (k_0-k_1)\,mod\,p$;
			\STATE $V[2,3] \gets (V[2,3]*\gamma_1)\,mod\,p$.
			\ENDIF
			\RETURN{$V$}
		\end{algorithmic}
	\end{breakablealgorithm}

\end{document}